\newtheorem{theorem}{Theorem}
\newtheorem{remark}{Remark}
\newtheorem{proposition}{Proposition}
\newtheorem{corollary}{Corollary}
\begin{document}

\title[Additional reductions in the  k-constrained modified KP hierarchy]
{Additional reductions in the k-constrained modified KP hierarchy}%
\author{O. Chvartatskyi}
\author{Yu. Sydorenko}
\address{Ivan Franko National University of L'viv, 1 Universytetska st., 79000 L'viv, Ukraine}
\email{alex.chvartatskyy@gmail.com, y\_sydorenko@franko.lviv.ua}%
UDC 517.9

MSC 35Q51,  35Q53, 35Q55, 37K10, 37K40, 37K35, 35C08

\keywords{solitons, binary Darboux transformation,
modified constrained Kadomtsev-Petviashvili hierarchy,Grammian solutions}

\begin{abstract}
Additional reductions in the modified k-constrained KP hierarchy are
proposed. As a result we obtain generalizations of Kaup-Broer
system, Korteweg-de Vries equation and a modification of Korteweg-de
Vries equation that belongs to the modified k-constrained KP
hierarchy. We also propose solution generating technique based on
binary Darboux transformations for the obtained equations.
\end{abstract}
\maketitle

\section{Introduction}

The  algebraic  constructions  of the  well-known  Kyoto group
\cite{Ohta}, which  are  called  the Sato theory, play an important
role in the contemporary  theory of nonlinear integrable systems of
mathematical  and theoretical physics.  The leading place in  these
investigations  is occupied by the theory of equations of
Kadomtsev-Petviashvili type (KP hierarchy)  and  their
generalizations and applications  \cite{Ohta,LDA,Blas}.

One of known generalizations of the KP hierarchy arise as a result
of k-symmetry constraints (so-called k-cKP hierarchy) that were
investigated in \cite{SS,KSS,Chenga1,CY,Chenga2}. k-cKP hierarchy
are closely connected with so-called KP equation with
self-consistent sources (KPSCS) \cite{M1,M2,M3,Sam}. 
Multicomponent k-constraints of the KP hierarchy were introduced in
\cite{SSq} and investigated in
\cite{Oevel93,ZC,Oevel96,Aratyn97,WLG}. This extension of k-cKP
hierarchy contains vector (multicomponent) generalizations of such
physically relevant systems like the nonlinear Schr\"odinger
equation, the Yajima-Oikawa system, a generalization of the
Boussinesq equation, and the Melnikov system.


 The modified k-constrained KP (k-cmKP) hierarchy was proposed in
\cite{KSO,OC}. It contains, for example, the vector Chen-Lee-Liu,
the modified KdV (mKdV) equation and their multi-component
extensions. The k-cmKP hierarchy and dressing methods for it via
integral transformations were investigated in \cite{SydA,BS1,PHD}

In \cite{MSS,6SSS} (2+1)-dimensional extensions of the k-cKP
hierarchy ((2+1)-dimensional k-cKP hierarchy) were introduced and
dressing methods via differential transformations were investigated.
Some systems of this hierarchy were investigated via binary Darboux
transformations in \cite{BS1,PHD}. This hierarchy was also
rediscovered recently in \cite{LZL1,LZL2}. Matrix generalizations of
(2+1)-dimensional k-cKP hierarchy were considered in
\cite{Zeng,YYQB}.

In this paper our aim is to consider additional reductions of the
k-cmKP hierarchy that lead to new generalizations of well-known
integrable systems. We also investigated dressing methods for the
obtained systems via integral transformations that arise from Binary
Darboux Transformations (BDT).

 This work is organized as follows. In Section 2 we
 present a short survey of results on constraints for the KP hierarchies
 including the k-cmKP hierarchy. In Section 3 we investigate Lax
representations obtained as a result of additional reductions in the
k-cmKP hierarchy and corresponding nonlinear systems. Section 4
presents results on dressing methods for Lax pairs obtained in
Section 3. In the final section, we discuss the obtained results and
mention problems for further investigations.

\section{Symmetry constraints of the KP hierarchy}\label{kckp}

Let us recall some basic objects and notations concerning KP
hierarchy, modified KP hierarchy, their multicomponent k-constraints
and their (2+1)-extensions. A Lax representation of the KP hierarchy
is given by
\begin{equation}\label{ssk}
L_{t_n}=[B_n,L],\,    \qquad n\geq1,
\end{equation}
where $L=D+U_1D^{-1}+U_2D^{-2}+\ldots$ is a scalar
pseudodifferential operator, $t_1:=x$, $D:=\frac{\partial}{\partial
x}$, and $B_n:= (L^n)_+
 := (L^n)_{\geq0}=D^n+\sum_{i=0}^{n-2}u_iD^i$ is
the differential operator part of $L^n$. The consistency condition
(zero-curvature equations), arising from the commutativity of flows
(\ref{ssk}), is

\begin{equation}\label{LP}
B_{n,t_k}-B_{k,t_n}+[B_n,B_k]=0.
\end{equation}

Let $B^{\tau}_n$ denote the formal transpose of $B_n$, i.e.
$B^{\tau}_n:=(-1)^nD^n+\sum_{i=0}^{n-2}(-1)^iD^iu^{\top}_i$, where
$^{\top}$ denotes the matrix transpose. We will use curly brackets
to denote the action of an operator on a function whereas, for
example, $B_n \, q$ means the composition of the operator $B_n$ and
the operator of multiplication by the function $q$. The following
formula holds for $B_nq$ and $B_n\{{q}\}$:
$B_n\{{q}\}:=(B_n{q})_{=0}=B_nq-(B_n{q})_{>0}.$ In the case $k=2$,
$n=3$ formula (\ref{LP}) presents a Lax pair for the
Kadomtsev-Petviahvili equation. Its Lax pair was obtained in
\cite{D} (see also \cite{Zakharov}).

The multicomponent k-constraints of the KP hierarchy is given by \cite{SSq} 
\begin{equation}\label{eq11}
  L_{t_n}=[B_n,L],
  \end{equation}
with the k-symmetry reduction
\begin{equation}\label{eq21}
    L_k:=L^k=B_k+\sum_{i=1}^m\sum_{j=1}^mq_im_{ij}D^{-1}r_j=B_k+{\bf
q}{\mathcal M}_0D^{-1}{\bf r}^{\top},
\end{equation}
where ${\bf q}=(q_1,\ldots,q_m)$ and ${\bf r}=(r_1,\ldots,r_m)$ are
vector functions, ${\mathcal M}_0=(m_{ij})_{i,j=1}^m$ is a constant
$m\times m$ matrix. In the scalar case ($m=1$) we obtain
k-constrained KP hierarchy \cite{SS,KSS,Chenga1,CY,Chenga2}. The
hierarchy given by (\ref{eq11})-(\ref{eq21}) admits the Lax
representation (here $k\in{\mathbf{N}}$ is fixed):
\begin{equation}\label{hier1}
  [L_k,M_n]=0,\,\,\, L_k=B_k+{\bf q}{\mathcal M}_0D^{-1}{\bf r}^{\top}, \quad
  M_n=\partial_{t_n}-B_n.
\end{equation}
Lax equation (\ref{hier1}) is equivalent to the following system:
\begin{equation}\label{hier231}
[L_k,M_n]_{\geq0}=0,\,\, M_n\{{\bf q}\}=0,\,\,\,M_n^{\tau}\{{{\bf
r}}\}=0.
\end{equation}

Below we will also use the formal adjoint
$B^*_n:=\bar{B}^{\tau}_n=(-1)^nD^n+\sum_{i=0}^{n-2}(-1)^iD^i{u}^*_i$
of $B_n$, where $^\ast$ denotes the Hermitian conjugation (complex
conjugation and transpose).

 For $k=1$, the hierarchy given by (\ref{hier231}) is a multi-component generalization of the AKNS
hierarchy. For $k=2$ and $k=3$, one obtains vector generalizations
of the Yajima-Oikawa and Melnikov \cite{M1,M2,M3} hierarchies,
respectively. An essential extension of the k-cKP hierarchy is its
(2+1)-dimensional generalization introduced in \cite{MSS,6SSS} and
rediscovered in \cite{LZL1,LZL2}.

In \cite{KSO,OC}, a k-constrained modified KP (k-cmKP) hierarchy was
introduced and investigated. Dressing methods for k-cmKP hierarchy
under additional $D$-Hermitian reductions were also investigated in
\cite{SydA,BS1}. At first we recall the definition of the modified
KP hierarchy.
A Lax representation of this hierarchy is given by
\begin{equation}\label{sskMod}
L_{t_n}=[B_n,L],\,    \qquad n\geq1,
\end{equation}
where $L=D+U_0+U_1D^{-1}+U_2D^{-2}+\ldots$ and $B_n:= (L^n)_{>0}
 :=D^n+\sum_{i=1}^{n-1}u_iD^i$ is
the purely differential operator part of $L^n$. The consistency
condition arising from the commutativity of flows (\ref{sskMod}), is
\begin{equation}\label{LP}
B_{n,t_k}-B_{k,t_n}+[B_n,B_k]=0.
\end{equation}
The multicomponent k-constraints of the modified KP hierarchy are given by the operator equation: 
\begin{equation}\label{eq1}
  L_{t_n}=[B_n,L],
  \end{equation}
with the k-symmetry reduction
\begin{equation}\label{eq2}
    L_k:=L^k=B_k-\sum_{i=1}^m\sum_{j=1}^mq_im_{ij}D^{-1}r_jD=B_k-{\bf
q}{\mathcal M}_0D^{-1}{\bf r}^{\top}D,
\end{equation}
where ${\bf q}=(q_1,\ldots,q_m)$ and ${\bf r}=(r_1,\ldots,r_m)$ are
vector functions, ${\mathcal M}_0=(m_{ij})_{i,j=1}^m$ is a constant
$m\times m$ matrix.
The hierarchy (\ref{eq1})-(\ref{eq2}) admits the Lax representation
(here $k\in{\mathbf{N}}$ is fixed):
\begin{equation}\label{hier}
\begin{array}{l}
  [L_k,M_n]=0,\,\,\, L_k=B_k-{\bf q}{\mathcal M}_0D^{-1}{\bf r}^{\top}D, \\
M_n=\alpha_n\partial_{t_n}-B_n,\,\,B_n=D^n+\sum_{i=1}^{n-1}u_iD^i.
\end{array}
\end{equation}
We can rewrite the Lax pair (\ref{hier}) in the following way:
\begin{equation}\label{hierxx}
\begin{array}{l}
  [L_k,M_n]=0,\,\,\, L_k=B_k-{\bf q}{\mathcal M}_0{\bf r}^{\top}+{\bf q}{\mathcal
M}_0D^{-1}{\bf r}_x^{\top},\\
  M_n=\alpha_n\partial_{t_n}-B_n,\,\,B_n=D^n+\sum_{i=1}^{n-1}u_iD^i.
\end{array}
\end{equation}
From Lax representation for k-cKP hierarchy
(\ref{hier1})-(\ref{hier231}) and representation (\ref{hierxx}) we
come to conclusion that equation $[L_k,M_n]=0$ in (\ref{hier}) is
equivalent to the following system: $ [L_k,M_n]_{>0}=0$, $M_n\{{\bf
q}\}=0$, $(M_n^{\tau})\{{{\bf r}}_x\}=0$ ($[L_k,M_n]_{=0}=0$ since
$[L_k,M_n]\{1\}=0$). We can rewrite the last equation in the following form:
$(D^{-1}M_n^{\tau}D)\{{{\bf r}}\}=0$ to keep the order of
differentiation equal to $n$. As a result we obtain:
\begin{equation}\label{hier23}
[L_k,M_n]_{>0}=0,\,\, M_n\{{\bf
q}\}=0,\,\,\,(D^{-1}M_n^{\tau}D)\{{{\bf r}}\}=0.
\end{equation}
The hierarchy (\ref{hier}) contains vector generalizations of the
Chen-Lee-Liu ($k=1$), the modified multi-component Yajima-Oikawa
($k=2$) and Melnikov ($k=3$) hierarchies. Consider some equations
that can be obtained from (\ref{hier}) under certain choice of $k$
and $n$ (see \cite{PHD}).
\begin{enumerate}
\item $k=1$, $n=2$.
Then (\ref{hier}) becomes
\begin{equation}\label{L1M2}
L_1=D-{{\bf q}}{\mathcal M}_0D^{-1}{{\bf r}}^{\top}D,\,\,
M_2=\alpha_2\partial_{t_2}-D^2+2{{{\bf q}}}{\mathcal M}_0{{{\bf
r}}}^{\top}D.
\end{equation}
In this case equation (\ref{hier23}) becomes the following system:
\begin{equation}\label{CLL}
\alpha_2{\bf q}_{t_2}-{\bf q}_{xx}+2{{{\bf q}}}{\mathcal M}_0{{{\bf
r}}}^{\top}{\bf q}_x=0,\,\alpha_2{\bf r}^{\top}_{t_2}+{\bf
r}^{\top}_{xx}+2{\bf r}^{\top}_x{{{\bf q}}}{\mathcal M}_0{{{\bf
r}}}^{\top}=0.
\end{equation}
Under additional Hermitian conjugation reduction: $\alpha_2=i$,
${\mathcal M}_0=-{\mathcal M}_0^*$, ${\bf r}^{\top}={{\bf q}}^*$
($L^*_1=-D^{-1}L_1D$, $M^*_2=D^{-1}M_2D$) in (\ref{CLL}) we obtain
the Chen-Lee-Liu equation:
\begin{equation}\label{CLLiu}
i{\bf q}_{t_2}-{\bf q}_{xx}+2{\bf q}{\mathcal M}_0{\bf q}^*{\bf
q}_x=0.
\end{equation}
\item $k=1$, $n=3$.
In this case (\ref{hier}) takes the form:
\begin{equation}\label{LM}
\begin{array}{l}
L_1=D-{{\bf q}}{\mathcal M}_0D^{-1}{{\bf r}}^{\top}D,\,\,\\
M_3=\alpha_3\partial_{t_3}-{D}^3+3{\bf q}{\mathcal M}_0{\bf
r}^{\top}{D}^2+3[{\bf q}_x{\mathcal M}_0{\bf r}^{\top}-({\bf
q}{\mathcal M}_0{\bf r}^{\top})^2]{D},
\end{array}
\end{equation}
and equations (\ref{hier23}) read:
\begin{equation}
\label{bi15}
\begin{array}{c}
\alpha_3{\bf q}_{t_3}={\bf q}_{xxx}-3({\bf q}{\mathcal M}_0{\bf
r}^{\top}){\bf q}_{xx}- 3({\bf q}_x{\mathcal M}_0{\bf r}^{\top}-
({\bf
q}{\mathcal M}_0{\bf r}^{\top})^2){\bf q}_x,\\
\alpha_3{\bf r}^{\top}_{t_3}={\bf r}^{\top}_{xxx}+3{\bf
r}^{\top}_{xx}({\bf q}{\mathcal M}_0{\bf r}^{\top})+ 3{\bf
r}^{\top}_x({\bf q}{\mathcal M}_0{\bf r}_x^{\top}+ ({\bf q}{\mathcal
M}_0{\bf r}^{\top})^2).
\end{array}
\end{equation}
After reduction of Hermitian conjugation: $\alpha_3=1$, ${\bf
r}^{\top}={\bf q}^*$, ${\mathcal M}_0=-{\mathcal M}_0^*$
($L_1^*=-D^{-1}L_1D$, $M_3^*=-D^{-1}M_3D$) (\ref{bi15}) becomes:
\begin{equation}\label{kkdv}
{\bf q}_{t_3}={\bf q}_{xxx}-3({\bf q}{\mathcal M}_0{\bf q}^{*}){\bf
q}_{xx}- 3({\bf q}_x{\mathcal M}_0{\bf q}^{*}- ({\bf q}{\mathcal
M}_0{\bf q}^{*})^2){\bf q}_x.
\end{equation}
\item $k=2$, $n=2$. After additional reduction in (\ref{hier}):
$\alpha_2=i$, $u_1:=iu$, $u=u(x,t_2)\in{\mathbb{R}}$,
${\mathcal{M}}_0={\mathcal{M}}_0^*$ Lax pair in (\ref{hier23})
reads:
\begin{equation}\nonumber
[L_2,M_2]=0,\,\, L_2={D}^2+iu{D}-{\bf q}{\mathcal M}_0{D}^{-1}{\bf
q}^*{D},\,\,M_2=i\partial_{t_2}-D^2-iuD,
\end{equation}
and equation (\ref{hier23}) becomes the modified Yajima-Oikawa
equation:
\begin{equation}\nonumber
\left.\begin{array}{l}
i{\bf q}_{t_2}={\bf q}_{xx}+iu{\bf q}_x,\quad
 u_{t_2}=2({\bf q}{\mathcal M}_0{\bf q}^*)_x.
\end{array}\right.
\end{equation}
\end{enumerate}
In the next section we will introduce additional reductions in
Chen-Lee-Liu hierarchy. As a result we will obtain generalizations
of the Kaup-Broer system, Korteweg-de Vries equation, modified
Korteweg-de Vries equation and their scalar coupled versions.

\section{Additional reductions in the modified k-constrained KP hierarchy}

For further convenience let us make a change in  formulae
(\ref{hier}):
\begin{equation}\label{forl}
{\bf q}\rightarrow{\tilde{\bf q}}, \,\,{\bf r}\rightarrow{\tilde{\bf
r}}, \,\,{\mathcal M}_0\rightarrow \tilde{{\mathcal M}_0}.
\end{equation}
After the change (\ref{forl}) the hierarchy (\ref{hier})  reads:
\begin{equation}\label{hierNew}
\begin{array}{l}
  [L_k,M_n]=0,\,\,\, L_k=B_k-{\tilde{\bf q}}{\tilde{\mathcal M}}_0D^{-1}{\tilde{\bf r}}^{\top}D, \quad
  M_n=\alpha_n\partial_{t_n}-B_n,\,\,\\B_n=D^n+\sum_{i=1}^{n-1}u_iD^i.
\end{array}
\end{equation}
Let us make the additional reduction in (\ref{hierNew}):
\begin{equation}\label{rd}
\begin{array}{l}
\tilde{\bf q}:=(q_1,\ldots,q_m,-v-\beta D^{-1}\{u\},1)=({\bf
q},-v-\beta D^{-1}\{u\},1),\,\,\,\\ {\tilde{\mathcal
M}}_0=\left(\!\begin{array}{ccc}{\mathcal M}_0&0&0\\0&1&0\\0&0&1
\end{array}\!\!
\right),\,\, \tilde{\bf r}:=(r_1,\ldots,r_m,1,\beta
D^{-1}\{u\})=({\bf r},1,\beta D^{-1}\{u\}),
\end{array}
\end{equation}
where ${\mathcal{M}}_0$ is $(m\times m)$-constant matrix, ${\bf q}$
and ${\bf r}$ are $m$-component vectors, $u$ and $v$ are scalar
functions, $\beta\in{\mathbb{R}}$, $D^{-1}\{u\}$ denotes indefinite
integral of the function $u$ with respect to $x$. After reduction
(\ref{rd}) k-cmKP hierarchy (\ref{hierNew}) takes the form:
\begin{equation}\label{NH}
\begin{array}{l}
  \!\![L_k,M_n]=0,\,L_k=B_k-{\bf q}{{\mathcal M}}_0D^{-1}{\bf r}^{\top}D+v+\beta D^{-1}u,\,
  M_n=\alpha_n\partial_{t_n}-B_n,\,\,\\B_n=D^n+\sum_{i=1}^{n-1}u_iD^i.
\end{array}
\end{equation}
In the following subsections we will investigate hierarchy
(\ref{NH}) in case $k=1$.
\subsection{Reductions of the Chen-Lee-Liu system}

Let us put $k=1$, $n=2$. Then Lax pair (\ref{NH}) becomes:
\begin{equation}\label{LM1}
\begin{array}{c}
[L_1,M_2]=0,\,\,L_1=D-{\bf q}{\mathcal M}_0D^{-1}{\bf
r}^{\top}D+\beta D^{-1}u+v,
\\
M_2=\alpha_2\partial_{t_2}-D^2+2({\bf q}{\mathcal M}_0{\bf
r}^{\top}-v)D.
\end{array}
\end{equation}
A system  that corresponds to equation (\ref{LM1}) has the form:
\begin{equation}\label{sysLM}
\begin{array}{c} \alpha_2 {\bf q}_{t_2}={\bf q}_{xx}-2({\bf
q}{\mathcal M}_0{\bf r}^{\top}-v){\bf q}_x,\quad \alpha_2{\bf
r}^{\top}_{t_2}=-{\bf r}^{\top}_{xx}-2{\bf r}^{\top}_x({\bf
q}{\mathcal M}_0{\bf r}^{\top}-v),\\
\alpha_2 u_{t_2}+u_{xx}+2\left(u({\bf q}{\mathcal M}_0{\bf
r}^{\top}-v)\right)_x=0,\\
-\alpha_2 v_{t_2}+2\beta u_x+v_{xx}-2\left({\bf q}{\mathcal M}_0{\bf
r}^{\top}-v\right)v_x=0.
\end{array}
\end{equation}

Consider additional reductions of Lax pair (\ref{LM1}) and system
(\ref{sysLM}).

\begin{enumerate}
\item Assume that ${\mathcal M}_0=-{\mathcal M}_0^*$, ${\bf
r}^{\top}={\bf q}^*$, $v=-2i{\rm Im}(\beta D^{-1}\{u\})$
($L^*_1=-DL_1D^{-1},\,$ $\,M^*_2=DM_2D^{-1}$). Then equation
(\ref{sysLM}) takes the form:
\begin{equation}\nonumber
\left.\begin{array}{c} \alpha_2 {\bf q}_{t_2}={\bf q}_{xx}-2(2i{\rm
Im}(\beta D^{-1}\{u\})+{\bf
q}{\mathcal M}_0{\bf q}^{*}){\bf q}_x,\\
\alpha_2 u_{t_2}+u_{xx}+2\left(u(2i{\rm Im}(\beta D^{-1}\{u\})+{\bf
q}{\mathcal M}_0{\bf
q}^{*})\right)_x=0.\\
\end{array}\right.
\end{equation}

\item Let us put ${\mathcal M}_0=0$ in operators $L_1$ and $M_2$:
$L_1=D+\beta D^{-1}u+v$, $M_2=\alpha_2\partial_{t_2}-D^2-2vD$. Then
equation (\ref{sysLM}) becomes the Kaup-Broer system:
\begin{equation}\label{Bg}
\begin{array}{c} \alpha_2 u_{t_2}+u_{xx}-2(uv)_x=0,\,\,
-\alpha_2 v_{t_2}+2\beta u_x+v_{xx}+2vv_x=0.
\end{array}
\end{equation}
In case $u=0$ in (\ref{Bg}) we obtain the Burgers equation:
$-\alpha_2 v_{t_2}+v_{xx}-vv_x=0$.

\item Consider the case $u=0$ in operators $L_1$ and $M_2$ (\ref{sysLM}):
$L_1=D-{\bf q}{\mathcal M}_0D^{-1}{\bf r}^{\top}D+v$,
$M_2=\alpha_2\partial_{t_2}-D^2+2(v+{\bf q}{\mathcal M}_0{\bf
r}^{\top})D$. Then (\ref{sysLM}) reads:
\begin{equation}\nonumber
\left.\begin{array}{c} \alpha_2 {\bf q}_{t_2}={\bf q}_{xx}-2({\bf
q}{\mathcal M}_0{\bf r}^{\top}-v){\bf q}_x,\,\,\,\,
\alpha_2{\bf r}^{\top}_{t_2}=-{\bf r}^{\top}_{xx}-2{\bf
r}^{\top}_x({\bf
q}{\mathcal M}_0{\bf r}^{\top}-v),\\
-\alpha_2 v_{t_2}+v_{xx}-\left({\bf q}{\mathcal M}_0{\bf
r}^{\top}-v\right)v_x=0.
\end{array}\right.
\end{equation}
\end{enumerate}

\subsection{Reductions of the modification of Korteweg-de Vries system (\ref{kkdv})}

Now let us consider the hierarchy (\ref{NH}) in case $k=1$, $n=3$.
Then its Lax pair $L_1$, $M_3$ in (\ref{NH}) reads:

\begin{equation}\label{LM}
\begin{array}{c}
[L_1,M_3]=0,\,\, L_1=D-{\bf q}{\mathcal M}_0D^{-1}{\bf
r}^{\top}D+\beta D^{-1}u+v,\,\,
M_3=\alpha_3\partial_{t_3}-D^3-\\-3(v-{\bf q}{\mathcal M}_0{\bf
r}^{\top})D^2-3\left(({\bf q}{\mathcal M}_0{\bf r}^{\top}-v)^2-{\bf
q}_x{\mathcal M}_0{\bf r}^{\top}+\beta u+v_x\right)D.
\end{array}
\end{equation}
Commutator equation in (\ref{LM}) is equivalent to the system:

\begin{equation}\label{S}
\left.
\begin{array}{c}
-\alpha_3
v_{t_3}+v_{xxx}+3vv_{xx}+3v^2v_x+3v_x^2+6\beta(uv)_x+\\+3\left\{({\bf
q}{\mathcal M}_0{\bf r}^{\top})^2-{\bf q}_x{\mathcal M}_0{\bf
r}^{\top}\right\}v_x-3{\bf q}{\mathcal M}_0{\bf
r}^{\top}v_{xx}-\\-6{\bf q}{\mathcal M}_0{\bf
r}^{\top}vv_x-3\beta({\bf q}{\mathcal M}_0{\bf
r}^{\top}u)_x-3\beta{\bf
q}{\mathcal M}_0{\bf r}^{\top}u_x=0,\\
\alpha_3 {\bf q}_{t_3}={\bf q}_{xxx}+3(v-{\bf q}{\mathcal M}_0{\bf
r}^{\top}){\bf q}_{xx}+\\+3\left\{({\bf q}{\mathcal M}_0{\bf
r}^{\top}-v)^2-{\bf q}_x{\mathcal M}_0{\bf r}^{\top}+v_x+\beta
u\right\}{\bf q}_x,\\
\alpha_3{\bf r}^{\top}_{t_3}={\bf r}^{\top}_{xxx}- 3\left({\bf
r}^{\top}_x\left(v-{\bf q}{\mathcal M}_0{\bf
r}^{\top}\right)\right)_x+\\+3{\bf r}^{\top}_x\left\{({\bf
q}{\mathcal
M}_0{\bf r}^{\top}-v)^2-{\bf q}_x{\mathcal M}_0{\bf r}^{\top}+v_x+\beta u\right\},\\
\alpha_3 u_{t_3}=u_{xxx}-3\left(u(v-{\bf q}{\mathcal M}_0{\bf
r}^{\top})\right)_{xx}+\\+3\left(u\left(({\bf q}{\mathcal M}_0{\bf
r}^{\top}-v)^2-{\bf q}_x{\mathcal M}_0{\bf r}^{\top}+v_x+\beta
u\right)\right)_x.
\end{array}
\right.
\end{equation}
Consider additional reductions in Lax pair (\ref{LM}) and
corresponding system (\ref{S}).

\begin{enumerate}
\item Assume that $v=-2i{\rm Im}(\beta D^{-1}\{u\})$, ${\bf q}^*={\bf
r}^{\top}$, $u\in{\mathbb{R}}$, ${\mathcal M}_0=-{\mathcal M}_0^*$
($L^*_1=-DL_1D^{-1},\,\,$ $M_3^*=-DM_3D^{-1}$). Then system
(\ref{S}) takes the form:
\begin{equation}\label{sys}
\left.\begin{array}{c} \alpha_3{\bf q}_{t_3}={\bf
q}_{xxx}-3(2i{\rm}{Im}(\beta u)+{\bf q}{\mathcal M}_0{\bf q}^*){\bf
q}_{xx}+\\+3\left(\left({\bf q}{\mathcal M}_0{\bf q}^*+2i{\rm
Im}(\beta u)\right)^2\right.\left.-{\bf q}_x{\mathcal M}_0{\bf
q}^*+\beta
u-2i {\rm Im}(\beta u)\right){\bf q}_x,\\
\alpha_3 u_{t_3}=u_{xxx}+3\left\{u\left(2i{\rm Im}(\beta u)+{\bf
q}{\mathcal M}_0{\bf
q}^*\right)\right\}_{xx}+\\+3\left(u\left\{({\bf{q}}{\mathcal
M}_0{\bf q}^*+2i{\rm Im}(\beta u))^2-\right.\right.{\bf
q}_x{\mathcal M}_0{\bf q}^*+\left.\left.\beta u-2i{\rm Im}(\beta
u)\right\}\right)_x.
\end{array}\right.
\end{equation}
\begin{enumerate}
\item Let us assume that in addition to reductions described in item 1 functions  ${\bf q}$
and $u$ with matrix ${\mathcal M}_0$ are real-valued (i.e., matrix ${\mathcal M}_0$ is skew-symmetric: ${\mathcal{M}}_0^{\top}=-{\mathcal{M}}_0$)  and $v=0$. Then
the scalar ${\bf q}{\mathcal M}_0{\bf q}^{\top}=0$ since ${{\bf
q}{\mathcal M}_0{\bf q}^{\top}}=-({{\bf q}{\mathcal M}_0{\bf
q}^{\top}})^{\top}$ and equation (\ref{sys}) reads:
\begin{equation}\label{mk}
\left.\begin{array}{c} \alpha_3{\bf q}_{t_3}={\bf q}_{xxx}-3{\bf q}_x{\mathcal M}_0{\bf q}^{\top}{\bf q}_x+3\beta u{\bf q}_x,\\
\alpha_3 u_{t_3}=u_{xxx}-3(u{\bf q}_x{\mathcal M}_0{\bf
q}^{\top})_x+6\beta u u_x.
\end{array}\right.
\end{equation}
\end{enumerate}

 \item Let us put ${\mathcal M}_0=0$ in operators $L_1$, $M_3$
(\ref{LM}):

$L_1=D+\beta D^{-1} u+v$,
$M_3=\alpha_3\partial_{t_3}-D^3-3vD^2-3(v^2+v_x+\beta u)D$. Then
equation (\ref{S}) takes the form:
\begin{equation}\label{sys2}
\left.
\begin{array}{c}
-\alpha_3 v_{t_3}+v_{xxx}+3vv_{xx}+3v^2v_x+3v_x^2+6\beta(uv)_x=0,
\\
\alpha_3 u_{t_3}=u_{xxx}-3(uv)_{xx}+3(u(v^2+v_x+\beta
u))_x.\end{array} \right.
\end{equation}

\begin{enumerate}
\item Under additional restrictions $v=-2i{\rm Im}(D^{-1}\{\beta
u\})$ ($L_1^*=-DL_1D^{-1},\,\, M_3^*=-DM_3D^{-1}$) in item 2 we
obtain a complex generalization of the modified Korteweg-de Vries
equation:
\begin{equation}\label{sys22}
\begin{array}{l}
\alpha_3 u_{t_3}=u_{xxx}+6i(u Im(D^{-1}\{\beta
u\}))_{xx}+\\+3(u(-4Im(D^{-1}\{\beta u\})^2-2i{\rm Im}(\alpha
u)+\beta u))_x.
\end{array}
\end{equation}
In the real case ($\beta\in {\mathbb{R}}$, $u$  is a real-valued
function, $v=0$) operators $L_1$ and $M_3$ take the form:
$L_1=D+\beta D^{-1}u, M_3=\beta\partial_t-D^3-3\beta uD$, and we
obtain KdV equation in (\ref{sys22}):
\begin{equation}\label{kd}
\alpha_3 u_{t_3}=u_{xxx}+6\beta u u_x.
\end{equation}
\end{enumerate}
\item Let us put $u=0$ in Lax pair (\ref{LM}): $L_1=D-{\bf
q}{\mathcal M}_0D^{-1}{\bf r}^{\top}D+v,$
$M_3=\alpha_3\partial_{t_3}-D^3-3(v-{\bf q}{\mathcal M}_0{\bf
r}^{\top})D^2-3\left(({\bf q}{\mathcal M}_0{\bf r}^{\top}-v)^2-{\bf
q}{\mathcal M}_0{\bf r}^{\top}+v_x\right)D$. Equation (\ref{S})
becomes:
\begin{equation}\label{S23}
\left.
\begin{array}{c}
-\alpha_3 v_{t_3}+v_{xxx}+3vv_{xx}+3v^2v_x+3v_x^2+\\+3\left\{({\bf
q}{\mathcal M}_0{\bf r}^{\top})^2-{\bf q}_x{\mathcal M}_0{\bf
r}^{\top}\right\}v_x-3{\bf q}{\mathcal M}_0{\bf
r}^{\top}v_{xx}-6{\bf
q}{\mathcal M}_0{\bf r}^{\top}vv_x=0,\\
\alpha_3 {\bf q}_{t_3}={\bf q}_{xxx}+3(v-{\bf q}{\mathcal M}_0{\bf
r}^{\top}){\bf q}_{xx}+\\+3\left\{({\bf q}{\mathcal M}_0{\bf
r}^{\top}-v)^2-{\bf q}_x{\mathcal M}_0{\bf r}^{\top}+v_x\right\}{\bf q}_x,\\
\alpha_3{\bf r}^{\top}_{t_3}={\bf r}^{\top}_{xxx}- 3\left({\bf
r}^{\top}_x\left(v-{\bf q}{\mathcal M}_0{\bf
r}^{\top}\right)\right)_x+\\+3{\bf r}^{\top}_x\left\{({\bf
q}{\mathcal M}_0{\bf r}^{\top}-v)^2-{\bf q}_x{\mathcal M}_0{\bf
r}^{\top}+v_x\right\}.
\end{array}
\right.
\end{equation}
\end{enumerate}

\section{Dressing methods for k-cmKP hierarchy}

In this section our aim is to elaborate dressing methods for the
k-cmKP hierarchy (\ref{hier}). At first we recall a main result from
paper \cite{K2009}. Let $1\times K$-matrix functions $\varphi$ and
$\psi$ be solutions of linear problems with (2+1)-dimensional
generalization of the operator $L_k$ (\ref{eq21}) wit more general differential part $B_k$:
\begin{equation}\label{prx}
\begin{array}{c}
L_k\{\varphi\}=\varphi\Lambda,\,\,L_k^{\tau}\{\psi\}=\psi\tilde{\Lambda},\,\,\Lambda,\tilde{\Lambda}\in Mat_{K\times K}({\mathbb{C}}),\\
L_k=\beta_k\partial_{\tau_k}+B_k+{\bf q}{\mathcal M}_0D^{-1}{\bf
r}^{\top},\,\, B_{k}=\sum_{j=0}^{k}{u}_jD^j.
\end{array}
\end{equation}
Introduce a  binary Darboux transformation (BDT) in the following
way:
\begin{equation}\label{W}
W=I-\varphi\left(C+D^{-1}\{\psi^{\top}\varphi\}\right)^{-1}D^{-1}\psi^{\top}:=
I-\varphi\Delta^{-1}D^{-1}\psi^{\top},
\end{equation}
where $C$ is a $K\times K$-constant nondegenerate matrix. The
inverse operator $W^{-1}$ has the form:
\begin{equation}\label{W-}
W^{-1}=I+\varphi
D^{-1}\left(C+D^{-1}\{\psi^{\top}\varphi\}\right)^{-1}\psi^{\top}=I+\varphi D^{-1}\Delta^{-1}\psi^{\top}.
\end{equation}
 The following theorem is proven in \cite{K2009}.
\begin{theorem}{\cite{K2009}}\label{2009}
The operator $\hat{L}_k:=WL_kW^{-1}$  obtained from $L_k$ in
(\ref{prx}) via BDT (\ref{W}) has the form
\begin{equation}\label{Lop}
\begin{array}{l}
\hat{L}_k:=WL_kW^{-1}=\beta_k\partial_{\tau_k}+\hat{B}_k+\hat{\bf
q}{\mathcal M}_0D^{-1}{\hat{{\bf r}}}^{\top}+\Phi{\mathcal
M}D^{-1}\Psi^{\top},\,\\ \hat{B}_k=\sum_{j=0}^{k}\hat{u}_jD^j,
\end{array}
\end{equation}
where
\begin{equation}\label{DSM}
\begin{array}{l}
{\mathcal M}=C\Lambda-\tilde{\Lambda}^{\top}C,\,
\Phi=\varphi\Delta^{-1},\,\,
\Psi=\psi\Delta^{-1,\top},\,\Delta=C+D^{-1}\{\psi^{\top}\varphi\},\\
{\hat{\bf q}}=W\{{\bf q}\},\,\,{\hat{\bf r}}=W^{-1,\tau}\{{\bf r}\}.
\end{array}
\end{equation}
 $\hat{u}_j$ are scalar coefficients depending on functions $\varphi$,
$\psi$ and $u_i,\,\, i=\overline{0,j}$. In particular,
\begin{equation}\nonumber
\hat{u}_k=u_k,\quad
\hat{u}_{k-1}=u_{k-1},\ldots.
\end{equation}
The exact forms of all the coefficients $\hat{u}_j$ can be found in \cite{K2009}.
\end{theorem}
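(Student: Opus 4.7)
The plan is to verify the intertwining identity $\hat{L}_k W = W L_k$ rather than literally composing $W L_k W^{-1}$, since the former avoids expanding inverse series. First, I would check that $W^{-1}$ as given in (\ref{W-}) is indeed the inverse of $W$. This reduces to the pseudo-differential identity
\begin{equation*}
\varphi\Delta^{-1}D^{-1}\psi^{\top}\varphi D^{-1}\Delta^{-1}\psi^{\top}
= \varphi\Delta^{-1}D^{-1}\psi^{\top}+\varphi D^{-1}\Delta^{-1}\psi^{\top},
\end{equation*}
which follows by using $\Delta_x=\psi^{\top}\varphi$ together with the standard $D^{-1}fD^{-1}=D^{-1}\{f\}D^{-1}-D^{-1}(D^{-1}\{f\})$ rule to move $\psi^{\top}\varphi$ past $D^{-1}$.

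Next, I would compute $W L_k - \hat{L}_k W$ and show it vanishes, treating the $\beta_k\partial_{\tau_k}$, $B_k$, and $q\mathcal{M}_0 D^{-1}r^{\top}$ parts of $L_k$ separately. The key input is the linear system (\ref{prx}): $L_k\{\varphi\}=\varphi\Lambda$ and $L_k^{\tau}\{\psi\}=\psi\tilde\Lambda$. From $L_k\{\varphi\}=\varphi\Lambda$ one extracts an evolution for $\Delta$,
\begin{equation*}
\beta_k\partial_{\tau_k}\Delta = \psi^{\top}\varphi\Lambda - \tilde\Lambda^{\top}\psi^{\top}\varphi + \text{(derivative terms)},
\end{equation*}
after applying $D^{-1}\{\psi^{\top}\cdot\}$ to $L_k\{\varphi\}=\varphi\Lambda$, using the adjoint equation for $\psi$, and integrating by parts. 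This identity is what eventually produces the new residue coefficient $\mathcal{M}=C\Lambda-\tilde\Lambda^{\top}C$.

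The core of the argument is that for any scalar differential operator $B$, the composition $W B W^{-1}$ differs from $B$ (up to a purely differential part with new coefficients $\hat{u}_j$) by terms of the form $(\text{function})D^{-1}(\text{function})$, and these residues can be computed by the standard dressing formula. Specifically, the integral part $q\mathcal{M}_0 D^{-1}r^{\top}$ transforms via $W q\mathcal{M}_0 D^{-1}r^{\top}W^{-1}$, which after using $W\{q\}=\hat q$ and $W^{-1,\tau}\{r\}=\hat r$ produces the term $\hat{\bf q}\mathcal{M}_0 D^{-1}\hat{\bf r}^{\top}$ plus a correction. Meanwhile, conjugating $B_k$ and $\beta_k\partial_{\tau_k}$ by $W$ produces additional $D^{-1}$ residues of the form $\varphi(\cdots)D^{-1}\psi^{\top}$; all such residues collect, via the $\Delta$-evolution identity above, into the single term $\Phi\mathcal{M}D^{-1}\Psi^{\top}$ with $\Phi=\varphi\Delta^{-1}$, $\Psi=\psi\Delta^{-1,\top}$.

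The main obstacle is the purely algebraic bookkeeping: tracking how $\partial_{\tau_k}$, $D^j$ and the seed integral operator each contribute to the residue, and verifying that all cross terms telescope under $\Delta_x=\psi^{\top}\varphi$ and the two spectral equations. Once this is organized, the invariance of the leading coefficients $\hat{u}_k=u_k$, $\hat{u}_{k-1}=u_{k-1}$, $\ldots$ follows immediately from the fact that $W=I+O(D^{-1})$, so conjugation does not affect the top two symbols of a differential operator of order $k$; the explicit formulas for the lower $\hat{u}_j$ are then obtained recursively by matching powers of $D$, exactly as recorded in \cite{K2009}.
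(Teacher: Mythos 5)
A preliminary remark: the paper does not prove Theorem \ref{2009} at all --- it is imported verbatim from \cite{K2009} (``The following theorem is proven in \cite{K2009}''), and the paper's own computational work, the proof of Theorem \ref{Theor}, takes this statement as input and merely conjugates further by the gauge factor $w_0^{-1}$. So your attempt can only be measured against the standard binary--Darboux computation, which is exactly what you reconstruct: check that (\ref{W-}) inverts (\ref{W}) using $D^{-1}fD^{-1}=D^{-1}\{f\}D^{-1}-D^{-1}D^{-1}\{f\}$ and $\Delta_x=\psi^{\top}\varphi$, then conjugate the three pieces $\beta_k\partial_{\tau_k}$, $B_k$ and ${\bf q}{\mathcal M}_0D^{-1}{\bf r}^{\top}$ separately, using the spectral problems (\ref{prx}) to collect the negative-order residues into $\hat{\bf q}{\mathcal M}_0D^{-1}\hat{\bf r}^{\top}+\Phi{\mathcal M}D^{-1}\Psi^{\top}$. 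That is the right route and, in all likelihood, the one used in \cite{K2009}; it is also essentially the scheme the paper itself follows when proving Theorem \ref{Theor} for the operator $W_m$ (\ref{Wm}).

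Three points to fix or sharpen. First, your displayed identity for $WW^{-1}=I$ has a sign slip: with $A=\varphi\Delta^{-1}D^{-1}\psi^{\top}$ and $B=\varphi D^{-1}\Delta^{-1}\psi^{\top}$ one needs $AB=B-A$, not $AB=A+B$; the very computation you indicate (write $D^{-1}\{\psi^{\top}\varphi\}=\Delta-C$ and use that $C$ is constant) produces $B-A$, so the method is fine but the target equation as written is false. Second, ``conjugation does not affect the top two symbols'' is not immediate from $W=I+O(D^{-1})$ alone: naively $(W-I)B_kW^{-1}+B_k(W^{-1}-I)$ has order $k-1$; you need $WB_kW^{-1}=B_k+[W,B_k]W^{-1}$ together with the fact that a commutator of scalar operators of orders $-1$ and $k$ has order $k-2$ (the $\xi^{k-1}$ contributions cancel), which is what actually gives $\hat u_k=u_k$, $\hat u_{k-1}=u_{k-1}$ and shows that $\beta_kW\partial_{\tau_k}(W^{-1})$ and the transformed seed term feed only the integral part. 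Third, the heart of the theorem --- that the residues coming from $\partial_{\tau_k}$, $B_k$ and the seed term telescope into the single term $\Phi{\mathcal M}D^{-1}\Psi^{\top}$ with ${\mathcal M}=C\Lambda-\tilde\Lambda^{\top}C$ --- is asserted rather than derived; to close it you would use the identities $W\{\varphi\}=\varphi\Delta^{-1}C=\Phi C$ and $W^{-1,\tau}\{\psi\}=\Psi C^{\top}$ (consequences of $\Delta_x=\psi^{\top}\varphi$), the Lagrange identity for $B_k$, and the $\tau_k$-evolution of $\Delta$ extracted from (\ref{prx}), which is precisely where the constant $C$ and hence ${\mathcal M}$ enter. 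As an outline your proposal identifies all the needed ingredients; as a proof it still owes the residue bookkeeping that constitutes the actual content of (\ref{Lop})--(\ref{DSM}).
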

Using the previous theorem we obtain the following result for
(2+1)-generalization of operator $L_k$ from the k-cmKP hierarchy
(\ref{hier}):
\begin{theorem}\label{Theor}
Let $(1\times K)$-vector functions $\varphi$ and $\psi$ satisfy
linear problems:
\begin{equation}\label{pr}
\begin{array}{c}
L_k\{\varphi\}=\varphi\Lambda,\,\,L_k^{\tau}\{\psi\}=\psi\tilde{\Lambda},\,\,\Lambda,\tilde{\Lambda}\in Mat_{K\times K}({\mathbb{C}}),\\
L_k=\beta_k\partial_{\tau_k}+B_k-{\bf q}{\mathcal M}_0D^{-1}{\bf
r}^{\top}D,\,\, B_k=\sum_{i=1}^ku_iD^i.
\end{array}
\end{equation}
Then the operator $\hat{L}_k$ transformed via operator
\begin{equation}\label{Wm}
W_m:=w_0^{-1}W=w_0^{-1}\left(I-\varphi\Delta^{-1}D^{-1}\psi^{\top}\right)=I-\varphi{\tilde{\Delta}}^{-1}D^{-1}(D^{-1}\{\psi\})^{\top}D,\end{equation}
where
\begin{equation}\nonumber
\begin{array}{l}
w_0=I-\varphi\Delta^{-1}D^{-1}\{\psi^{\top}\},
\tilde{\Delta}=-C+D^{-1}\{D^{-1}\{\psi^{\top}\}\varphi_x\},\\\Delta=C+D^{-1}\{\psi^{\top}\varphi\},
\end{array}
\end{equation}

has the form:

\begin{equation}\label{Lop3}
\begin{array}{l}
\tilde{L}_k:=W_mL_kW_m^{-1}=\beta_k\partial_{\tau_k}+\tilde{B}_k-{\tilde{\bf
q}}{\mathcal M}_0D^{-1}{\tilde{{\bf
r}}}^{\top}D+{\tilde{\Phi}}{\mathcal
M}D^{-1}{\tilde{\Psi}}^{\top}D,\,\\
\tilde{B}_k=\sum_{j=1}^{k}\tilde{u}_jD^j,\,\,
\tilde{u}_k=u_k,\,\,\tilde{u}_{k-1}=u_{k-1}+ku_kw^{-1}_0w_{0,x},\ldots,
\end{array}
\end{equation}
where
\begin{equation}\label{DSM2}
\begin{array}{l}
{\mathcal M}=C\Lambda-\tilde{\Lambda}^{\top}C,\,
{\tilde{\Phi}}=-W_m\{\varphi\}C^{-1}=\varphi
{\tilde{\Delta}}^{-1},\,\,
\\{\tilde\Psi}=D^{-1}\{W_m^{\tau,-1}\{\psi\}\}C^{-1,\top}=D^{-1}\{\psi\}\Delta^{-1,\top},\,
{\tilde{\bf q}}=W_m\{{\bf q}\},\,\,\\{\tilde{\bf
r}}=D^{-1}W_m^{-1,\tau}D\{{\bf
r}\},\tilde{\Delta}=-C+D^{-1}\{D^{-1}\{\psi^{\top}\}\varphi_x\}.
\end{array}
\end{equation}
\end{theorem}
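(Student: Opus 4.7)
The plan is to reduce the statement to Theorem~\ref{2009} by factoring $W_m=w_0^{-1}W$, so that
\[
\tilde L_k \;=\; w_0^{-1}\bigl(W L_k W^{-1}\bigr) w_0.
\]
Theorem~\ref{2009} applies to operators whose integral tail has the k-cKP form ${\bf q}\mathcal{M}_0 D^{-1}{\bf r}^{\top}$, so the first task is to rewrite the k-cmKP operator (\ref{pr}) in that shape. A single integration by parts produces the operator identity $D^{-1}{\bf r}^{\top}D={\bf r}^{\top}-D^{-1}{\bf r}_x^{\top}$, which converts (\ref{pr}) into
\[
L_k \;=\; \beta_k\partial_{\tau_k}+\bigl(B_k-{\bf q}\mathcal{M}_0{\bf r}^{\top}\bigr)+{\bf q}\mathcal{M}_0 D^{-1}{\bf r}_x^{\top}.
\]
This is precisely the k-cKP form required by Theorem~\ref{2009}, of order $k$ and with dressing source ${\bf r}_x$ in place of ${\bf r}$. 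Since $L_k^{\tau}\{\psi\}$ is intrinsic to the operator $L_k$ and therefore unaffected by this algebraic rewriting, the spectral data (\ref{pr}) are legitimate inputs for Theorem~\ref{2009}.

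Theorem~\ref{2009} then delivers
\[
W L_k W^{-1} \;=\; \beta_k\partial_{\tau_k}+\hat B_k+\hat{\bf q}\mathcal{M}_0 D^{-1}\widehat{{\bf r}_x}^{\top}+\Phi\mathcal{M} D^{-1}\Psi^{\top},
\]
with $\hat{\bf q}=W\{{\bf q}\}$, $\widehat{{\bf r}_x}=W^{-1,\tau}\{{\bf r}_x\}$, and $\Phi,\Psi,\mathcal M$ as in (\ref{DSM}). The remaining left-right conjugation by the scalar function $w_0$ is a standard Miura-type gauge on the differential part; it produces the shifts $\tilde u_k=u_k$, $\tilde u_{k-1}=u_{k-1}+ku_k w_0^{-1}w_{0,x}$, and so on. For the integral terms, the identities $W_m^{-1,\tau}=w_0\,W^{-1,\tau}$ (since $w_0$ is scalar) together with $\tilde{\bf q}=W_m\{{\bf q}\}=w_0^{-1}\hat{\bf q}$ yield $\tilde{\bf r}_x=w_0\widehat{{\bf r}_x}$ for the definition $\tilde{\bf r}:=D^{-1}W_m^{-1,\tau}D\{{\bf r}\}$, hence $w_0^{-1}\hat{\bf q}\mathcal{M}_0 D^{-1}\widehat{{\bf r}_x}^{\top}w_0=\tilde{\bf q}\mathcal{M}_0 D^{-1}\tilde{\bf r}_x^{\top}$. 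A second application of the same by-parts identity in the reverse direction,
\[
\tilde{\bf q}\mathcal{M}_0 D^{-1}\tilde{\bf r}_x^{\top}\;=\;\tilde{\bf q}\mathcal{M}_0\tilde{\bf r}^{\top}-\tilde{\bf q}\mathcal{M}_0 D^{-1}\tilde{\bf r}^{\top}D,
\]
restores the desired k-cmKP tail $-\tilde{\bf q}\mathcal{M}_0 D^{-1}\tilde{\bf r}^{\top}D$; the Grammian $\Phi\mathcal{M}D^{-1}\Psi^{\top}$ undergoes the analogous reshaping and emerges as $\tilde\Phi\mathcal{M}D^{-1}\tilde\Psi^{\top}D$ with $\tilde\Phi=\varphi\tilde\Delta^{-1}$, $\tilde\Psi=D^{-1}\{\psi\}\Delta^{-1,\top}$ as in (\ref{DSM2}).

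The closed form $W_m=I-\varphi\tilde\Delta^{-1}D^{-1}(D^{-1}\{\psi\})^{\top}D$ is obtained by substituting the auxiliary identity $D^{-1}\psi^{\top}=(D^{-1}\{\psi\})^{\top}-D^{-1}(D^{-1}\{\psi\})^{\top}D$ into $W$ of (\ref{W}): the purely multiplicative part is recognized as $w_0$, and dividing by it gives the expression for $W_m$, with the formula $\tilde\Delta=-C+D^{-1}\{D^{-1}\{\psi^{\top}\}\varphi_x\}$ emerging from one further integration by parts applied to the resulting coefficient. The main obstacle is the bookkeeping of the zeroth-order contributions: after the conjugation by $w_0$, the gauge shift of $\hat B_k$ and the scalar remainders $-\tilde{\bf q}\mathcal{M}_0\tilde{\bf r}^{\top}$ and its Grammian analogue produced by the reverse by-parts identity must cancel exactly, so that $\tilde B_k=\sum_{j=1}^{k}\tilde u_j D^j$ carries no constant term. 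Verifying this cancellation via the explicit intermediate formulas of Theorem~\ref{2009} from \cite{K2009} is the only genuinely computational step, and it is what selects the specific Miura gauge $w_0$ rather than some arbitrary scalar conjugation.
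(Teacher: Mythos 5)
Your proposal follows essentially the same route as the paper's proof: factor $W_m=w_0^{-1}W$, integrate by parts to put $L_k$ into the k-cKP shape required by Theorem~\ref{2009}, apply that theorem, conjugate by the scalar $w_0$ (using that $w_0^{\tau}=w_0$ to identify $\tilde{\bf r}_x$ and $\tilde{\Psi}_x$), and integrate back by parts to restore the $D^{-1}(\cdot)D$ tails. The only stylistic difference is at the final step: where you propose checking the cancellation of the zeroth-order terms by explicit computation with the coefficient formulas of \cite{K2009}, the paper obtains it immediately from $\tilde{L}_k\{1\}=W_mL_kW_m^{-1}\{1\}=0$, which forces $\tilde{u}_0=0$.
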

\begin{proof}
Let us check that
\begin{equation}\nonumber
w_0^{-1}=I-\varphi{\tilde{\Delta}}D^{-1}\{\psi^{\top}\},\,\,
\tilde{\Delta}=-C+D^{-1}\{D^{-1}\{\psi^{\top}\}\varphi_x\}.
\end{equation}
In order to do that we have to verify the equality $w_0w_0^{-1}=I$:
\begin{equation}\nonumber
w_0w_0^{-1}=I-\varphi\Delta^{-1}D^{-1}\{\psi^{\top}\}-\varphi{\tilde{\Delta}}^{-1}D^{-1}\{\psi^{\top}\}
$$$$+\varphi{\tilde{\Delta}}^{-1}\left(C+D^{-1}\{\psi^{\top}\varphi\}-C+D^{-1}\{D^{-1}\{\psi^{\top}\}\varphi_x\}\right)\varphi\Delta^{-1}D^{-1}\{\psi^{\top}\}=I\end{equation}
Analogously can be verified that $w_0^{-1}w_0=I$. By Theorem 1 we
obtain:
\begin{equation}\label{nuub}
\begin{array}{l}
W_mL_kW_m^{-1}=w_0^{-1}W\left(\beta_k\partial_{\tau_k}+B_k-{\bf
q}{\mathcal M}_0{\bf r}^{\top}+{\bf q}{\mathcal M}_0D^{-1}{\bf
r}_x^{\top}\right)W^{-1}w_0=\\
\beta_k\partial_{\tau_k}+(W_mL_kW_m^{-1})_{\geq0}-w_0^{-1}W\{{\bf
q}\}{\mathcal M}_0D^{-1}\left(W^{-1,\tau}\{{\bf
r}_x\}\right)^{\top}w_0+\\+w_0^{-1}\Phi{\mathcal
M}D^{-1}\Psi^{\top}w_0
\end{array}
\end{equation}
We shall point out that:
$\Psi^{\top}w_0=\Delta^{-1}\psi^{\top}(I-\varphi\Delta^{-1}D^{-1}\{\psi^{\top}\})=\left(\Delta^{-1}D^{-1}\{\psi^{\top}\}\right)_x=
{\tilde{\Psi}}^{\top}_{x}$. We shall also observe that:
\begin{equation}\nonumber
\begin{array}{l}
(W^{-1,\tau}\{{\bf r}_x\})^{\top}w_0=\left({\bf
r}^{\top}_x-D^{-1}\{{\bf r}^{\top}_x\varphi\}\Delta^{-1}\psi^{\top}
\right)\left(I-\varphi\Delta^{-1}D^{-1}\{\psi^{\top}\}\right)=\\
\left({\bf r}^{\top}-D^{-1}\{{\bf
r}^{\top}_x\varphi\}\Delta^{-1}D^{-1}\{\psi^{\top}\}\right)_x=(D^{-1}W_m^{-1,\tau}D\{{\bf
r}\})^{\top}_x=\tilde{{\bf r}}^{\top}_{x}
\end{array}
\end{equation}
Thus (\ref{nuub}) can be rewritten as:
\begin{equation}\label{AS}
\begin{array}{l}
\tilde{L}_k\!=\!W_mL_kW_m^{-1}\!=w_0^{-1}W\left(\!\beta_k\partial_{\tau_k}\!+B_k\!-{\bf
q}{\mathcal M}_0{\bf r}^{\top}+{\bf q}{\mathcal M}D^{-1}{\bf
r}_x^{\top}\right)W^{-1}w_0=\\
\beta_k\partial_{\tau_k}+(W_mL_kW_m^{-1})_{\geq0}+\tilde{{\bf
q}}{\mathcal M}_0D^{-1}{\tilde{\bf
r}}_{x}^{\top}-{\tilde\Phi}{\mathcal
M}D^{-1}{\tilde{\Psi}}_{x}^{\top}=\\
\beta_k\partial_{\tau_k}+(W_mL_kW_m^{-1})_{\geq0}+\tilde{{\bf
q}}{\mathcal M}_0{\tilde{\bf r}}^{\top}-{\tilde{\Phi}}{\mathcal
M}{\tilde{\Psi}}^{\top}-\tilde{{\bf q}}{\mathcal
M}_0D^{-1}{\tilde{\bf r}}^{\top}D-\\+{\tilde\Phi}{\mathcal
M}D^{-1}{\tilde{\Psi}}^{\top}D
\end{array}
\end{equation}
Using that $\tilde{L}_k\{1\}=\tilde{u}_0=0$ we obtain the form of
$\tilde{B}_k$. I.e.,
$\tilde{B}_k:=(W_mL_kW_m^{-1})_{\geq0}+\tilde{{\bf q}}{\mathcal
M}_0{\tilde{\bf r}}^{\top}-{\tilde{\Phi}}{\mathcal
M}{\tilde{\Psi}}^{\top}=\sum_{j=1}^k\tilde{u}_jD^j$.
\end{proof}
Theorem \ref{Theor} provides us with a dressing method for k-cmKP
hierarchy (\ref{hier}). I.e., the following corollary directly
follows from the previous theorem:
\begin{corollary}
Assume that operators $L_k$ and $M_n$ in (\ref{hier}) satisfy Lax
equation: $[L_k,M_n]=0$. Let functions $\varphi$ and $\psi$ satisfy
equations:\begin{equation}\label{pr2}
\begin{array}{c}
L_k\{\varphi\}=\varphi\Lambda,\,\,L_k^{\tau}\{\psi\}=\psi\tilde{\Lambda},\,\,\Lambda,\tilde{\Lambda}\in Mat_{K\times K}({\mathbb{C}}),\\
M_n\{\varphi\}=0,\,\, M_n^{\tau}\{\psi\}=0.
\end{array}
\end{equation}
Then transformed operators $\tilde{L}_k=W_mL_kW_m^{-1}$ (see
(\ref{Lop3}) with $\beta_k=0$) and
\begin{equation}\label{Mop}
\tilde{M}_n=W_mM_nW_m^{-1}=\alpha_n\partial_{t_n}-D^n-\sum_{i=1}^{n-1}\tilde{u}_iD^i
\end{equation}
via transformation $W_m$ (\ref{Wm}) also satisfy Lax equation:
$[\tilde{L}_k,\tilde{M}_n]=0$
\end{corollary}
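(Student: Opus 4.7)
The strategy is to apply Theorem \ref{Theor} with the \emph{same} data $(\varphi,\psi,C)$ twice: once to $L_k$ to produce $\tilde L_k$ as stated in (\ref{Lop3}) (with $\beta_k=0$), and once to $M_n$ to produce $\tilde M_n$. Once both transformed operators are in hand, the Lax equation is automatic by similarity, since
\[
[\tilde L_k,\tilde M_n]=[W_mL_kW_m^{-1},W_mM_nW_m^{-1}]=W_m[L_k,M_n]W_m^{-1}=0,
\]
using the invertibility of $W_m$ established in Theorem \ref{Theor}. The entire content of the corollary thus reduces to showing that $\tilde M_n$ has the purely differential form (\ref{Mop}).

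To produce this form, I would apply Theorem \ref{Theor} to $M_n$ by identifying it with an operator of the type treated there: the time derivative $\beta_k\partial_{\tau_k}$ is matched with $\alpha_n\partial_{t_n}$, the differential part $B_k$ with $-B_n$, and the nonlocal summand ${\bf q}\mathcal{M}_0 D^{-1}{\bf r}^\top D$ is set to zero. The hypotheses $M_n\{\varphi\}=0$ and $M_n^\tau\{\psi\}=0$ are precisely the linear problems required by Theorem \ref{Theor}, with spectral matrices $\Lambda=0$ and $\tilde\Lambda=0$. Substituting these into the formula $\mathcal M=C\Lambda-\tilde\Lambda^\top C$ from (\ref{DSM2}) yields $\mathcal M=0$, so the nonlocal correction $\tilde\Phi\mathcal M D^{-1}\tilde\Psi^\top D$ in (\ref{Lop3}) vanishes identically. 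What remains is $\tilde M_n=\alpha_n\partial_{t_n}-\tilde B_n$ with $\tilde B_n$ purely differential and leading coefficient $\tilde u_n=1$ by the invariance $\tilde u_k=u_k$ stated in Theorem \ref{Theor}.

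The last detail is that the constant term of $\tilde B_n$ vanishes, i.e.\ $\tilde u_0=0$. As in the proof of Theorem \ref{Theor}, this is deduced from $\tilde M_n\{1\}=0$, which in turn follows from $M_n\{1\}=0$ (since $B_n$ starts at $D^1$ and $\partial_{t_n}\{1\}=0$) combined with $W_m\{1\}=W_m^{-1}\{1\}=1$; the latter rests on the factorization $W_m\{1\}=w_0^{-1}(1-\varphi\Delta^{-1}D^{-1}\{\psi^\top\})=w_0^{-1}\cdot w_0=1$ already exploited in the proof of Theorem \ref{Theor}. The main obstacle is not the commutativity, which is formal, but carefully tracing Theorem \ref{Theor} through the operator $M_n$ and confirming that the spectral data $\Lambda=\tilde\Lambda=0$ really do annihilate the nonlocal tail; it is precisely this alignment that keeps $\tilde M_n$ in the class of purely differential operators admissible in the hierarchy (\ref{hier}).
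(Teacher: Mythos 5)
Your proposal is correct and follows essentially the same route as the paper: the Lax equation is obtained by the similarity identity $[\tilde L_k,\tilde M_n]=W_m[L_k,M_n]W_m^{-1}=0$, and the form of the transformed operators is read off from Theorem \ref{Theor}. You merely spell out in more detail the step the paper leaves implicit, namely that applying Theorem \ref{Theor} to $M_n$ with $\Lambda=\tilde\Lambda=0$ kills the nonlocal tail via $\mathcal M=0$ and that $W_m\{1\}=1$ forces $\tilde u_0=0$.
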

\begin{proof}
It can be checked directly that:
$[\tilde{L}_k,\tilde{M}_n]=[W_mL_kW^{-1}_m,W_mM_nW_m^{-1}]=W_m[L_k,M_n]W_m^{-1}=0$.
The exact form of operators $\tilde{L}_k$ and $\tilde{M}_n$ follows
from Theorem \ref{Theor}.
\end{proof}
The following corollary follows from Corollary 1 and Theorem
\ref{Theor}:
\begin{corollary}
Suppose that functions $\varphi$ and $\psi$ satisfy equations
(\ref{pr2}) with operators $L_k$ and $M_n$ defined by (\ref{NH})
then transformed operators have the form:
\begin{equation}\label{NHL}
\begin{array}{l}
  \tilde{L}_k=B_k-\tilde{{\bf q}}{\mathcal M}_0D^{-1}\tilde{{\bf r}}^{\top}D+{\tilde{\Phi}}{\mathcal
M}D^{-1}{\tilde{\Psi}}^{\top}D+\tilde{v}+\beta
D^{-1}\tilde{u},\\
\tilde{M}_n=\alpha_n\partial_{t_n}-{\tilde{B}}_n,\,\,{\tilde{B}}_n=D^n+\sum_{i=1}^{n-1}\tilde{u}_iD^i,
\end{array}
\end{equation}
where

\begin{equation}\label{SM2}
\begin{array}{l}
{\mathcal M}=C\Lambda-\tilde{\Lambda}^{\top}C,\,
{\tilde{\Phi}}=-W_m\{\varphi\}C^{-1}=\varphi
{\tilde{\Delta}}^{-1},\,\,
\\{\tilde{\Psi}}=D^{-1}\{W_m^{\tau,-1}\{\psi\}\}C^{-1,\top}=D^{-1}\{\psi\}\Delta^{-1,\top},\,
{\tilde{\bf q}}=W_m\{{\bf q}\},\,\,\\{\tilde{\bf
r}}=W_m^{-1,\tau}\{{\bf
r}\},\tilde{\Delta}=-C+D^{-1}\{D^{-1}\{\psi^{\top}\}\varphi_x\},\Delta=C+D^{-1}\{\psi^{\top}\varphi\},\,\,\\
\tilde{u}=W_m^{-1,\tau}\{D^{-1}\{u\}\},\,\,\tilde{v}=W_m\{v\}+\beta
D^{-1}W_m^{-1,\tau}\{u\}-\beta W_m\{D^{-1}\{u\}\}.
\end{array}
\end{equation}
\end{corollary}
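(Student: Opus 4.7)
\medskip

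My plan is to treat Corollary 2 as a direct computation using Theorem \ref{Theor} and Corollary 1, the key move being to recognize the operator $L_k$ of the form (\ref{NH}) as a specialization of the generic k-cmKP operator of (\ref{hierNew}) via the substitution (\ref{rd}). Once this identification is made, Theorem \ref{Theor} immediately produces the transformed operator $\tilde L_k$ in a form parallel to (\ref{Lop3}); the real work is verifying that the reduction (\ref{rd}) is preserved by the BDT so that the output can be re-cast in the form (\ref{NHL}) with new scalar functions $\tilde u,\tilde v$.

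First, I would set $\tilde{\mathbf q}^{\mathrm{ext}}:=(\mathbf q,-v-\beta D^{-1}\{u\},1)$ and $\tilde{\mathbf r}^{\mathrm{ext}}:=(\mathbf r,1,\beta D^{-1}\{u\})$ together with $\tilde{\mathcal M}_0=\mathrm{diag}(\mathcal M_0,1,1)$, reproducing the reduction identity
\begin{equation}\nonumber
B_k-\tilde{\mathbf q}^{\mathrm{ext}}\tilde{\mathcal M}_0 D^{-1}(\tilde{\mathbf r}^{\mathrm{ext}})^{\top}D
=B_k-\mathbf q\mathcal M_0D^{-1}\mathbf r^{\top}D+v+\beta D^{-1}u,
\end{equation}
which is verified by the single integration-by-parts $D^{-1}(\beta D^{-1}\{u\})D=\beta D^{-1}\{u\}-\beta D^{-1}u$. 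Then Theorem \ref{Theor} immediately applies and produces
\begin{equation}\nonumber
\tilde L_k=\tilde B_k-W_m\{\tilde{\mathbf q}^{\mathrm{ext}}\}\tilde{\mathcal M}_0 D^{-1}\bigl(D^{-1}W_m^{-1,\tau}D\{\tilde{\mathbf r}^{\mathrm{ext}}\}\bigr)^{\top}D+\tilde\Phi\mathcal M D^{-1}\tilde\Psi^{\top}D,
\end{equation}
with the dressing term $\tilde\Phi\mathcal M D^{-1}\tilde\Psi^{\top}D$ given as in (\ref{DSM2}) (the extra two components of $\tilde{\mathbf q}^{\mathrm{ext}},\tilde{\mathbf r}^{\mathrm{ext}}$ do not contribute to $\mathcal M=C\Lambda-\tilde\Lambda^{\top}C$ because their eigenvalue blocks cancel).

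The crux of the proof is to show that the reduction (\ref{rd}) is preserved, i.e.\ that $W_m\{\tilde{\mathbf q}^{\mathrm{ext}}\}$ and $D^{-1}W_m^{-1,\tau}D\{\tilde{\mathbf r}^{\mathrm{ext}}\}$ can be read off as $(\tilde{\mathbf q},-\tilde v-\beta D^{-1}\{\tilde u\},1)$ and $(\tilde{\mathbf r},1,\beta D^{-1}\{\tilde u\})$ for appropriate scalars. Three component-by-component checks are needed: the constant entry of $\tilde{\mathbf q}^{\mathrm{ext}}$ survives because $W_m\{1\}=1$ (since $W_m=I-\varphi\tilde\Delta^{-1}D^{-1}(D^{-1}\{\psi\})^{\top}D$ annihilates constants by the trailing $D$); the constant entry of $\tilde{\mathbf r}^{\mathrm{ext}}$ is preserved upon fixing the constant of integration in $D^{-1}W_m^{-1,\tau}D\{1\}$; and the entry $\beta D^{-1}\{u\}$ in $\tilde{\mathbf r}^{\mathrm{ext}}$ transforms via $D^{-1}W_m^{-1,\tau}D\{\beta D^{-1}\{u\}\}=\beta D^{-1}\{W_m^{-1,\tau}\{u\}\}$, which forces the identification $\tilde u=W_m^{-1,\tau}\{u\}$ (after matching the integration-by-parts expansion with the formulas of (\ref{SM2})). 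The scalar $\tilde v$ is then read off from the second component $W_m\{-v-\beta D^{-1}\{u\}\}=-W_m\{v\}-\beta W_m\{D^{-1}\{u\}\}$, which upon comparison with $-\tilde v-\beta D^{-1}\{\tilde u\}=-\tilde v-\beta D^{-1}W_m^{-1,\tau}\{u\}$ yields exactly the formula $\tilde v=W_m\{v\}+\beta D^{-1}W_m^{-1,\tau}\{u\}-\beta W_m\{D^{-1}\{u\}\}$ stated in (\ref{SM2}).

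Finally, $\tilde M_n$ is obtained directly from Corollary 1 applied to the purely differential $M_n=\alpha_n\partial_{t_n}-B_n$: the intertwining $W_m M_n W_m^{-1}$ automatically yields a differential operator of the same order $n$ with leading coefficient $1$, whose lower coefficients $\tilde u_i$ are those produced by Theorem \ref{Theor}. The main obstacle I foresee is the careful bookkeeping of integration constants in $D^{-1}W_m^{-1,\tau}D\{1\}$ and in the identification of $\tilde u$ and $\tilde v$; once these constants are fixed consistently (so that $W_m^{-1,\tau}$ preserves the one-dimensional span of constants up to the chosen normalization), the formulas of (\ref{SM2}) follow mechanically from the components computed above.
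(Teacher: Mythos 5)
Your proposal is correct and follows exactly the route the paper intends: the authors give no separate proof of this corollary, asserting only that it ``follows from Corollary 1 and Theorem \ref{Theor}'' via the reduction (\ref{rd}), and your component-by-component verification that $W_m$ preserves that reduction (using $W_m\{1\}=1$, the integration-by-parts identity $D^{-1}(\beta D^{-1}\{u\})D=\beta D^{-1}\{u\}-\beta D^{-1}u$, and the choice of integration constant in $D^{-1}W_m^{-1,\tau}D\{1\}$) is precisely that omitted computation. One minor remark: the identification $\tilde u=W_m^{-1,\tau}\{u\}$ that your calculation forces (rather than the $W_m^{-1,\tau}\{D^{-1}\{u\}\}$ printed in (\ref{SM2})) is in fact the version consistent with the explicit formula $\tilde{u}=u-D\{\bar{\varphi}\bar{\tilde{\Delta}}^{-1}D^{-1}\{\varphi^{\top}u\}\}$ used later in Section 4.1, so this is a typographical slip in the statement rather than a gap in your argument.
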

As it was shown in previous Sections the most interesting systems
arise from the k-cmKP hierarchy (\ref{hier}) and its reduction
(\ref{NH}) after a Hermitian conjugation reduction. Our aim is to
show that under additional restrictions Binary Darboux
Transformation $W_m$ (\ref{Wm}) preserves this reduction.

\begin{proposition}\label{her}
\begin{enumerate}
\item
Let $\psi=\bar{\varphi}_x$ and $C=-C^*$ in the dressing operator
$W_m$ (\ref{Wm}). Then the operator $W_m$ is D-unitary
($W_m^{-1}=D^{-1}W_m^*D$).
\item
Let the operator $L_k$ (\ref{hier}) be D-Hermitian:
$L_k^*=DL_kD^{-1}$ (D-skew-Hermitian: $L_k^*=-DL_kD^{-1}$) and
$M_{n}$ (\ref{hier}) be D-Hermitian (D-skew-Hermitian). Then the
operator $\hat{L}_k=W_mL_kW_m^{-1}$ (see (\ref{Lop3})) transformed
via the D-unitary operator $W_m$ is D-Hermitian (D-skew-Hermitian)
and $\hat{M}_{n}:=W_mM_{n}W_m^{-1}$ (\ref{Mop}) is D-Hermitian
(D-skew-Hermitian).
\item Assume that the conditions of items 1 and 2 hold. Let
$\tilde{\Lambda}=\bar{\Lambda}$ in the case of D-Hermitian $L_k$
($\tilde{\Lambda}=-\bar{\Lambda}$ in D-skew-Hermitian case). We
shall also assume that the function $\varphi$ satisfies the
corresponding equations in formulae (\ref{pr2}). Then ${\mathcal
M}={\mathcal M}^*$ (${\mathcal M}=-{\mathcal M}^*$) and
${\tilde{\Psi}}=
\bar{{\tilde{\Phi}}}$ (see formulae (\ref{DSM})).
\end{enumerate}
\end{proposition}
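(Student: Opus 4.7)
The plan is to treat the three items in sequence, using item~1 (D-unitarity of $W_m$) as the main input for items 2 and 3.

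For item 1, the first step is to unpack the reduction $\psi=\bar\varphi_x$. Choosing the integration constants so that $D^{-1}\{\psi\}=\bar\varphi$ and $D^{-1}\{\psi^{\top}\}=\bar\varphi^{\top}$, one obtains
\[
\Delta = C + D^{-1}\{\bar\varphi_{x}^{\top}\varphi\}, \qquad \tilde\Delta = -C + D^{-1}\{\bar\varphi^{\top}\varphi_{x}\}.
\]
The entrywise identity $(\bar\varphi_{x}^{\top}\varphi)^{\ast}=\bar\varphi^{\top}\varphi_{x}$ combined with $C^{\ast}=-C$ yields the key identity $\tilde\Delta^{\ast}=\Delta$, hence $(\tilde\Delta^{-1})^{\ast}=\Delta^{-1}$. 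Using $D^{\ast}=-D$ to take the Hermitian adjoint of $W_m = I - \varphi\tilde\Delta^{-1}D^{-1}\bar\varphi^{\top}D$, I obtain $W_m^{\ast}=I - D\varphi D^{-1}(\tilde\Delta^{-1})^{\ast}\varphi^{\ast}$ and then
\[
D^{-1}W_m^{\ast}D = I - \varphi D^{-1}\Delta^{-1}\varphi^{\ast}D.
\]
It remains to verify that $W_m\cdot(D^{-1}W_m^{\ast}D)=I$. Expanding the product, applying integration by parts to $D^{-1}\{\Delta^{-1}\varphi^{\ast}f_{x}\}$, and using $\Delta_{x}=\bar\varphi_{x}^{\top}\varphi$ together with the companion identity $\tilde\Delta_{x}=\bar\varphi^{\top}\varphi_{x}$ causes the three non-identity terms to cancel. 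This calculation is analogous to the verification of $w_0w_0^{-1}=I$ already carried out in the proof of Theorem~\ref{Theor}; I expect it to be the only delicate step of the whole proof, mostly a matter of tracking non-commuting products of $D$, $D^{-1}$, and matrix factors.

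Item 2 is then purely formal. Writing $L_k^{\ast}=\epsilon DL_kD^{-1}$ with $\epsilon=\pm 1$, and using $(W_m^{-1})^{\ast}=DW_mD^{-1}$ (which follows from item 1), one computes
\[
\hat L_k^{\ast} = (W_mL_kW_m^{-1})^{\ast} = DW_mD^{-1}\cdot\epsilon DL_kD^{-1}\cdot DW_m^{-1}D^{-1} = \epsilon D\hat L_kD^{-1},
\]
and the same argument applied verbatim to $M_n$ yields the assertion for $\hat M_n$.

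For item 3, I would first note that the equality $\tilde\Lambda=\epsilon\bar\Lambda$ is actually forced by the D-(skew-)Hermiticity of $L_k$ together with $\psi=\bar\varphi_x$: conjugating $L_k\{\varphi\}=\varphi\Lambda$ and applying $\bar L_k = \epsilon D^{-1}L_k^{\tau}D$ (which comes from $L_k^{\ast}=\epsilon DL_kD^{-1}$ and $D^{\tau}=-D$) gives $L_k^{\tau}\{\bar\varphi_x\} = \bar\varphi_x(\epsilon\bar\Lambda)$. Thus $\mathcal M = C\Lambda - \epsilon\bar\Lambda^{\top}C$, and a direct computation using $C^{\ast}=-C$ yields
\[
\mathcal M^{\ast} = \bar\Lambda^{\top}C^{\ast} - \epsilon C^{\ast}\Lambda = -\bar\Lambda^{\top}C + \epsilon C\Lambda = \epsilon\,\mathcal M.
\]
Finally, from $\tilde\Delta^{\ast}=\Delta$ one has $\overline{\tilde\Delta^{-1}} = \overline{(\Delta^{-1})^{\ast}} = (\Delta^{-1})^{\top}$, whence $\overline{\tilde\Phi} = \bar\varphi(\Delta^{-1})^{\top} = \tilde\Psi$, as claimed.
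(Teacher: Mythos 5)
The paper states Proposition~\ref{her} without any proof, so there is no in-text argument to compare against; judged on its own, your proposal is correct and supplies exactly the computation the authors leave implicit. The one substantive step, the verification of $W_m\,(D^{-1}W_m^{\ast}D)=I$ in item~1, does go through: writing $D^{-1}W_m^{\ast}D=I-\varphi D^{-1}\Delta^{-1}\varphi^{\ast}D$, the cross term in the product reduces after integration by parts, via $\Delta+\tilde\Delta=\varphi^{\ast}\varphi$ (equivalently $\Delta_x+\tilde\Delta_x=(\varphi^{\ast}\varphi)_x$, with the integration constants fixed as you specify), to $\varphi\tilde\Delta^{-1}D^{-1}\varphi^{\ast}D+\varphi D^{-1}\Delta^{-1}\varphi^{\ast}D$, which cancels the two middle terms; this is also consistent with the paper's factorization $W_m^{-1}=W^{-1}w_0$. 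Items~2 and~3 are the formal conjugation arguments one would expect at the paper's level of rigor, and your observation that $\tilde\Lambda=\pm\bar\Lambda$ is forced by $L_k^{\ast}=\pm DL_kD^{-1}$ and $\psi=\bar\varphi_x$, rather than being an independent assumption, is a correct strengthening. One cosmetic point: the proposition cites (\ref{DSM}) for $\tilde\Phi$ and $\tilde\Psi$, but the relevant definitions are those in (\ref{DSM2}); you use the correct ones.
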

In subparagraph 4.1  we will show how one can use methods described
in Theorem \ref{Theor} and its corollaries in order to obtain
solutions of KdV equation (\ref{kd}) and its generalization
(\ref{mk})

\subsection{Solution generating technique for system (\ref{mk}) and KdV equation (\ref{kd}).}
We shall consider equation (\ref{mk}) in case the dimension of
vector ${\bf q}$ and matrix ${\mathcal{M}}_0$ is even. I.e.,
$m=2{\tilde{m}}$, ${\tilde{m}}\in{\mathbb{N}}$ (in this situation
skew-symmetric matrix ${\mathcal{M}}_0$ can be non-degenerate).
Assume that the skew-symmetric matrix ${\mathcal{M}}_0$ in
(\ref{mk}) and vector-function ${\bf{q}}$ has the form:
\begin{equation}\label{MMMM}
{\mathcal{M}}_0=\left(\begin{array}{cc}0_{\tilde{m}}&I_{\tilde{m}}\\ -I_{\tilde{m}}& 0_{\tilde{m}}\end{array}\right),\,\,
{\bf q}=({\bf q}_1,{\bf q}_2)=\left(q_{11},q_{12},\ldots, q_{1\tilde{m}},q_{21},q_{22},
\ldots, q_{2{\tilde{m}}}\right),
\end{equation}
where $0_{\tilde{m}}$ is a $\tilde{m}\times{\tilde{m}}$-dimensional
matrix consisting of zeros, $I_{\tilde{m}}$ is an identity matrix
with the dimension $\tilde{m}\times{\tilde{m}}$. Equation (\ref{mk})
after notation ${\tilde{u}}:=u$ can be rewritten in the following
form:
\begin{equation}\label{3c}
\left.\begin{array}{c}
\alpha_3{\bf q}_{1,t_3}={\bf q}_{1,xxx}-3({\bf q}_{1,x}{\bf q}^{\top}_2-{\bf q}_{2,x}{\bf q}^{\top}_{1}){\bf q}_{1,x}+3\beta {\tilde{u}}{\bf q}_{1,x},\\
\alpha_3{\bf q}_{2,t_3}={\bf q}_{2,xxx}-3({\bf q}_{1,x}{\bf q}^{\top}_2-{\bf q}_{2,x}{\bf q}^{\top}_{1}){\bf q}_{2,x}+3\beta {\tilde{u}}{\bf q}_{2,x},\\
\alpha_3 {\tilde{u}}_{t_3}={\tilde{u}}_{xxx}-3({\tilde{u}}({\bf q}_{1,x}{\bf q}^{\top}_2-{\bf q}_{2,x}{\bf q}^{\top}_{1}))_x+6\beta {\tilde{u}} {\tilde{u}}_x.
\end{array}\right.
\end{equation}
In this subsection our aim is to consider the case ${\tilde{m}}=1$ (although the corresponding solution generating technique
can be generalized to the case of an arbitrary natural ${\tilde{m}}$). In this situation ${\bf q}_1=q_1$ and ${\bf q}_2=q_2$ are scalars.
We shall suppose that $K=2{\tilde{K}}$ is an even natural number.
Assume that the function $\varphi$ is $(1\times K)$-vector solution of
the system:
\begin{equation}\label{thersys}
\left.
\begin{array}{c}
L_{10}\{\varphi\}=\varphi_x+\beta
D^{-1}\{u\varphi\}=\varphi\Lambda,\,\,\Lambda\in Mat_{{K}\times
{K}}(\mathbb{C}),\,\, \beta\in{\mathbb{R}},\\
M_{30}\{\varphi\}=\alpha_3\varphi_{t_3}-\varphi_{xxx}-3\beta
u\varphi_x=0,
\end{array}
 \right.
\end{equation}
with a number $u\in{\mathbb{R}}$.

Using Theorem \ref{Theor} and Proposition \ref{her} we obtain that
dressed operators $\tilde{L}_{10}$ and $\tilde{M}_{30}$ via operator
$W_m$ (\ref{Wm}) with skew-Hermitian matrix $C$ and
$\psi=\bar{\varphi}_x$ has the form:
\begin{equation}\label{DDS}
\begin{array}{l}
\tilde{L}_{10}=W_mL_{10}W_m^{-1}=D+{\tilde{\Phi}}{\mathcal
M}D^{-1}{\tilde{\Phi}}^*D+\beta
D^{-1}\tilde{u}+\tilde{v}\\
\tilde{M}_{30}=W_mM_{30}W_m^{-1}=\alpha_3\partial_{t_3}-D^3-(\tilde{v}+{\tilde{\Phi}}{\mathcal
M}{\tilde{\Phi}}^{*})D^2-\\-3\left(({\tilde{\Phi}}{\mathcal
M}{\tilde{\Phi}}^{*}+\tilde{v})^2+{\tilde{\Phi}}_{x}{\mathcal
M}{\tilde{\Phi}}^{*}+\tilde{v}_x+\beta {\tilde{u}}\right)D,
\end{array}
\end{equation}
where ${\mathcal{M}}=C\Lambda-\Lambda^*C^*$,
${\tilde{\Phi}}=\varphi{\tilde{\Delta}}^{-1}$,
$\tilde{u}=u-D\{\bar{\varphi}\bar{{\tilde{\Delta}}}^{-1}D^{-1}\{\varphi^{\top}u\}\}$,
$\tilde{v}=\beta({\tilde{\Phi}}
D^{-1}\{\varphi^*u\}-D^{-1}\{u\varphi\}{\tilde{\Phi}}^*)$,
$\tilde{\Delta}=-C+D^{-1}\{\varphi^*\varphi_x\}$. It has to be
pointed out that the function
${\tilde{\Phi}}=-W_m\{\varphi\}C^{-1}=\varphi{\tilde{\Delta}}^{-1}$
satisfies equation: $\tilde{M}_{30}\{{\tilde{\Phi}}\}=0$ because
$\tilde{M}_{30}\{{\tilde{\Phi}}\}=W_mM_{30}W_m^{-1}\{W_m\{\varphi\}C^{-1}\}=0$.

Now we assume that function $\varphi$, matrices $C$ and $\Lambda$
are real. In this case
$\tilde{v}=\tilde{v}^{\top}=\beta({\tilde{\Phi}}
D^{-1}\{\varphi^{\top}u\}-D^{-1}\{u\varphi\}{\tilde{\Phi}}^{\top})^{\top}=-{\tilde{v}}=0$.

Let us
put
\begin{equation}\label{matr}
\Lambda=diag(\lambda_{11},\lambda_{12},\lambda_{21},\lambda_{22},\ldots,\lambda_{{\tilde K}1},\lambda_{{\tilde K}2}),\,
\lambda_{ij}\in{\mathbb{R}},$$$$
C=\left(\begin{array}{ccccccc}C_{11}&C_{12}&\ldots&C_{1{\tilde K}}\\
C_{21}&C_{22}&\ldots&C_{2{\tilde K}}\\\vdots&\vdots&\ldots&\vdots\\C_{{\tilde K}1}&C_{{\tilde K}2}&\ldots&C_{{\tilde K}{\tilde K}}\end{array}\right),
\end{equation}
where elements $C_{ij}$ are $(2\times 2)$-matrices of the form:
\begin{equation}\label{Cmat}
C_{ij}=\left(\begin{array}{cc}0&-\frac{1}{\lambda_{j2}+\lambda_{i1}}\\
\frac{1}{\lambda_{j1}+\lambda_{i2}}&0\end{array}\right).
\end{equation}

Under such a choice of $C$ (\ref{Cmat}) and $\Lambda$ (\ref{matr})
we obtain that $2{\tilde K}\times 2{\tilde K}$-dimensional matrix ${\mathcal
M}=C\Lambda-\Lambda^{\top}C^{\top}$ has the block form: ${\mathcal
M}=({\mathcal M}_{ij})_{i,j=1}^{\tilde K}$, where ${\mathcal M}_{ij}={\mathcal M}_0$ (see formula (\ref{MMMM}) in case ${\tilde{m}}=1$).
Let us denote by: ${\bf 1}_{\tilde K}=(I_2,\ldots,I_2)$ matrix that consists
of ${\tilde K}$ $(2\times2)$-dimensional identity matrices $I_2$. Then
${\mathcal M}=-{\bf 1}_{\tilde K}^{\top}{\mathcal M}_0{\bf 1}_{\tilde K}$.

 Let us put $u=const$  and choose solution of system (\ref{thersys}) in the form:
$\varphi=\left(\varphi_{11},\varphi_{12}\varphi_{21},\varphi_{22},\ldots,\varphi_{{\tilde K}1},\varphi_{{\tilde K}2}\right)$,
$\varphi_{ij}=exp\left\{(\frac12\lambda_{ij}+\gamma_{ij})x+a_{ij}t\right\}$,
where $\gamma_{ij}=\sqrt{\frac14\lambda_{ij}^2-\beta u}$,
$a_{ij}=\left\{\left(\frac12\lambda_{ij}+\gamma_{ij}\right)^3+3\beta
u\left(\frac12 \lambda_{ij}+\gamma_{ij}\right)\right\}/\alpha_3$.
$(2{\tilde K}\times 2{\tilde K})$-matrix ${\tilde{\Delta}}$ then takes the block form:
\begin{equation}\label{sc}
{\tilde{\Delta}}=\!-C\!+\!D^{-1}\{\varphi^{\top}\varphi_x\}=\left({\tilde{\Delta}}_{ij}\right)_{i,j=1}^{\tilde K}=$$$$=\left(\begin{array}{cc}\!\!\!\frac{\alpha_{i1}}{\alpha_{i1}+\alpha_{j1}}e^{(\alpha_{i1}+\alpha_{j1})x+(a_{i1}+a_{j1})t}&
\frac{\alpha_{i2}}{\alpha_{i2}+\alpha_{j1}}e^{(\alpha_{i2}+\alpha_{j1})x+(a_{i2}+a_{j1})t}+\frac{1}{\lambda_{j2}+\lambda_{i1}}\\
\frac{\alpha_{i1}}{\alpha_{i1}+\alpha_{j2}}e^{(\alpha_{i1}+\alpha_{j2})x+(a_{i1}+a_{j2})t}-\frac{1}{\lambda_{j1}+\lambda_{i2}}&\frac{\alpha_{i2}}{\alpha_{i2}+\alpha_{j2}}e^{(\alpha_{i2}+\alpha_{j2})x+(a_{i2}+a_{j2})t}\end{array}\right)_{i,j=1}^{\tilde K},
\end{equation}
where $\alpha_{ij}=\frac12\lambda_{ij}+\gamma_{ij}$. Functions ${\bf
q}=(q_1,q_2)=\varphi{\tilde{\Delta}}^{-1}{\bf 1}_{\tilde K}^{\top}$ and
$\tilde{u}=u-
D\left\{\varphi{\tilde{\Delta}}^{-1}D^{-1}\{\varphi^{\top}u\}\right\}$
will be solutions of system (\ref{3c}).

 We
shall point out that in case $\beta=0$, ${\tilde K}=1$, $\alpha_3=1$
we obtain the following solution of the real version of the
mKdV-type equation (equation (\ref{3c}) with $\tilde{u}=0)$:
\begin{equation}\nonumber
\begin{array}{l}
{\bf q}=(q_1,q_2),\quad
q_1=-\frac{2(\lambda_{11}+\lambda_{12})\varphi_{12}}{(\lambda_{11}-\lambda_{12})\varphi_{11}\varphi_{12}-2},\quad
q_2=\frac{2(\lambda_{11}+\lambda_{12})\varphi_{11}}{(\lambda_{11}-\lambda_{12})\varphi_{11}\varphi_{12}-2},\\
\varphi_{1j}=e^{\lambda_{1j}x+\lambda_{1j}^3t_3},\,\lambda_{1j}>0,\,
j=\overline{1,2}.\,
\end{array}
\end{equation}
It is also possible to choose other types of matrices $C$ and
$\Lambda$ in (\ref{matr}) and (\ref{Cmat}). In particular the
following remark holds:
\begin{remark}
In case ${\tilde{K}}=1$ vector of functions $\varphi=(\varphi_1,\varphi_2)$,
$\varphi_1=\cos(x\lambda_{12}+(3\lambda_{11}^2\lambda_{12}-\lambda_{12}^3)t+\frac{\pi}4)e^{x\lambda_{11}+(\lambda_{11}^3-3\lambda_{11}\lambda_{12}^2)t}$,
$\varphi_2=\sin(x\lambda_{12}+(3\lambda_{11}^2\lambda_{12}-\lambda_{12}^3)t+\frac{\pi}4)e^{x\lambda_{11}+(\lambda_{11}^3-3\lambda_{11}\lambda_{12}^2)t}$.
will be a solution of the system (\ref{thersys}) with $u=0$ and
$\Lambda=\left(\begin{array}{cc}\lambda_{11}&\lambda_{12}\\-\lambda_{12}&\lambda_{11}\end{array}\right)$.
The corresponding solution generating technique given by
(\ref{matr})-(\ref{sc}) in case ${\tilde{K}}=1$,
$C_{\tilde{K}}=C_1=\left(\begin{array}{cc}0&\frac{1}{2\lambda_{11}}\\-\frac{1}{2\lambda_{11}}&0\end{array}\right)$
gives us a solution of mKdV-type equation (\ref{3c}) with $\tilde{u}=0$
that coincides with a solution obtained in \cite{T3BKP}.
\end{remark}

Now we will consider solution generating technique for KdV
(\ref{kd}). For this purpose we assume that function $\varphi$,
matrices $\Lambda=diag(\Lambda_1,\ldots,\Lambda_{\tilde{K}})$ and
$C=diag(C_1\ldots,C_{\tilde{K}})$ are real and have the form:
\begin{equation}
\Lambda_j=\left(\begin{array}{cc}0&\lambda_j\\\lambda_j&0\end{array}\right),\,C_j=\left(\begin{array}{cc}0&-c_j\\c_j&0\end{array}\right).
\end{equation}
In this case we obtain that the matrix ${\mathcal
M}=C\Lambda-\Lambda^{\top} C^{\top}$ consists of zeros in
(\ref{DDS}).   Consider the following solution of system
(\ref{thersys}):
\begin{equation}\nonumber
\varphi=\left(\varphi_{11},\varphi_{12},\varphi_{21},\varphi_{22},\ldots,\varphi_{{\tilde{K}}1},\varphi_{{\tilde{K}}2}\right),
$$$$
\varphi_{j1}=e^{\gamma_j x+a_j
t}\cosh\left(\frac{\lambda_j}2x+b_jt\right),\,\,
\varphi_{j2}=e^{\gamma_j x+a_j
t}\sinh\left(\frac{\lambda_j}2x+b_jt\right),
\end{equation}
where $\gamma_j=\sqrt{\frac14\lambda_j^2-\beta u}$,
$a_j=\left(\gamma_j^3+\frac34\gamma_j\lambda_j^2+3\beta
u\gamma_j\right)/\alpha_3$,
$b_j=\left(3\gamma_j^2\frac{\lambda_j}{2}+\frac{\lambda_j^3}{8}+\frac32\beta
u\lambda_j\right)/\alpha_3$ and $\lambda_j$, $\alpha_3$, $\beta$,
$u$ $\in{\mathbb{R}}$. Thus, $\tilde{v}=0$ and we obtain Lax pair
for KdV equation in (\ref{DDS}): $\tilde{L}_{10}=D+\beta
D^{-1}{\tilde{u}}$,
$\tilde{M}_{30}=\alpha_3\partial_{t_3}-D^3-3\beta {\tilde{u}}D$.

Formula
\begin{equation}\label{eqs}
\tilde{u}=u-D\left\{\varphi\tilde{\Delta}^{-1}D^{-1}\{\varphi^{\top}u\}\right\}
:=u+\hat{u},
\end{equation}
gives us a finite density solution of equation (\ref{kd}). In
particular, if ${\tilde{K}}=1$ and $c_1=\frac18\frac{\lambda_1}{\gamma_1}$ we obtain the following solution:
\begin{equation}\label{solut}
\tilde{u}=u+\frac{2\gamma_1^2}{\beta\cosh^2\left(\gamma_1x+a_1t\right)}.
\end{equation}
Now we shall substitute $\tilde{u}$ (\ref{eqs}) in KdV equation
(\ref{kd}):
\begin{equation}\label{kd2}
\alpha_3 \hat{u}_{t_3}=\hat{u}_{xxx}+6\beta \hat{u} \hat{u}_x+6\beta
u\hat{u}_x.
\end{equation}
The corresponding pair of operators have the form: $L_1=D+\beta
D^{-1}(\hat{u}+u), M_3=\alpha_3\partial_{t_3}-D^3-3\beta
\hat{u}D-3\beta uD$. We have two ways to obtain soliton solutions
(that are rapidly decreasing at both infinities in contradistinction
to finite density solutions (\ref{eqs}) that tend to an arbitrary
real number $u$) for KdV from formula (\ref{eqs}):
\begin{enumerate}
\item By taking the limit $u\rightarrow0$ in (\ref{eqs})-(\ref{kd2}).
\item By making a change of the
independent variables: $\tilde{x}:=x+6\alpha_3^{-1}\beta ut_3$, $\tilde{t}_3:=t_3$ and $\hat{v}({\tilde{x}},\tilde{t}_3):=\hat{u}(x,t_3)$ in
equation (\ref{kd2}) and solutions (\ref{eqs})-(\ref{solut}). This change corresponds to the change of differential operators in the
Lax pair for equation (\ref{kd2}) consisting of operators $L_1$ and $M_3$: $\alpha_3\partial_{\tilde{t}_3}=\alpha_3\partial_{t_3}-3\beta uD$.
\end{enumerate}

\section{Conclusions}

In this paper we obtain new generalizations (\ref{NH}) of the
modified k-cKP (k-cmKP) hierarchy (\ref{hier}). The obtained
hierarchy also generalizes the BKP hierarchy \cite{TBKP,T2BKP,T3BKP}
which is the special case of the k-cmKP hierarchy. Dressing methods
elaborated via BDT-type operators (Section 4) give rise to exact
solutions of the integrable systems that hierarchy (\ref{NH})
contains. In particular, soliton solutions for generalization of
mKdV-type equation (\ref{3c}) and finite density solutions as well
as regular soliton solutions were constructed for the KdV equation
using the proposed dressing methods. This methods also allow to
obtain rational and singular multi-soliton solutions of the
corresponding nonlinear systems under the special choice of spectral
matrix $\Lambda$ in the linear system (\ref{thersys}). In order to
minimize the size of this article we do not include those results
here.
 We shall point out that the special case
of equation (\ref{3c}) (${\tilde{u}}=0$) and its solutions were
considered in \cite{T3BKP}. Generalizations (\ref{NH}) of the k-cmKP
hierarchy (\ref{hier}) together with different extensions of k-cKP
hierarchy (including (1+1) and (2+1)-BDk-cKP hierarchy
\cite{n,NEW,NEW2}) is a good basis for construction of other
hierarchies of nonlinear integrable equations with corresponding
dressing methods. In particular in our forthcoming papers we plan to
introduce (2+1)-BDk-cmKP hierarchy and investigate solution
generating technique for the corresponding integrable systems.
Consider as an example Lax pair from the (1+1)-BDk-cKP hierarchy
that was investigated in \cite{NEW}:

\begin{equation}\label{ssex2+1}
\begin{array}{l}
\!P_{1,1}\!=D+\!c_1
\!M_2\{{\bf q}\}\!{\mathcal M}_0D^{-1}{\bf r}^{\top}+c_1{\bf
q}{\mathcal M}_0D^{-1}(M_2^{\tau}\{{\bf
r}\})^{\top}+c_0{\bf q}{\mathcal M}_0D^{-1}{\bf r}^{\top}\!=\!\\
=D+c_1\left(\alpha_2{\bf q}_{t_2}{\mathcal M}_0D^{-1}{\bf
r}^{\top}-\alpha_2{\bf q}{\mathcal M}_0D^{-1}{\bf
r}_{t_2}^{\top}-{\bf q}_{xx}{\mathcal M}_0D^{-1}{\bf
r}^{\top}\right.-\\\left.-{\bf q}{\mathcal M}_0D^{-1}{\bf
r}^{\top}_{xx}-u{\bf q}{\mathcal M}_0D^{-1}{\bf r}^{\top}-{\bf
q}{\mathcal M}_0D^{-1}{\bf r}^{\top}u\right)+c_0{\bf q}{\mathcal
M}_0D^{-1}{\bf r}^{\top},\\ M_{2}=\alpha_2\partial_{t_2}-D^2-u.
\end{array}
\end{equation}
It was shown in \cite{NEW} that the Lax equation $[P_{1,1},M_2]=0$
in (\ref{ssex2+1}) is equivalent to the system:
\begin{equation}\label{PM}
[P_{1,1},M_2]_{\geq0}=0,\, c_1M_2^2\{{\bf q}\}+c_0M_2\{{\bf
q}\}=0,\,\,c_1(M_2^{\tau})^2\{{\bf r}\}+c_0M_2^{\tau}\{{\bf r}\}=0.
\end{equation}
that is equivalent to the generalization of the AKNS system. In case
$c_0=1$, $c_1=0$ we obtain AKNS system in (\ref{PM}):
\begin{eqnarray}\nonumber
&&\alpha_2{\bf q}_{t_2}-{\bf q}_{xx}-u{\bf q}=0,\,\, -\alpha_2{\bf
r}_{t_2}-{\bf r}_{xx}-u{\bf r}=0,\,\, u=2{\bf q}{\mathcal M}_0{\bf
r}^{\top}.\label{TS}
\end{eqnarray}
Assume that the scalar function $f$ satisfies equations
$P_{1,1}\{f\}=f\lambda$, $M_2\{f\}=0$. We shall introduce the
notations $\tilde{M}_2:=f^{-1}M_2f$, $\hat{M}_2:=D\tilde{M}_2D^{-1}$,
$\tilde{P}_{1,1}:=f^{-1}P_{1,1}f$, ${\tilde{\bf q}}:=f^{-1}{\bf q}$,
${\tilde{\bf r}}^{\top}:=D^{-1}\{{\bf r}^{\top}f\}$ and consider the
following gauge transformations
\begin{equation}\nonumber
\begin{array}{l}
\tilde{M}_2=f^{-1}M_2f=\alpha_2\partial_{t_2}-D^2-2\tilde{u}D,\,\,\tilde{u}=f^{-1}f_x,\\
\tilde{P}_{1,1,}=f^{-1}P_{1,1}f=D+f^{-1}f_x+c_1
f^{-1}M_2\{{\bf q}\}\!{\mathcal M}_0D^{-1}{\bf
r}^{\top}f+\\+c_1f^{-1}{\bf q}{\mathcal M}_0D^{-1}(M_2^{\tau}\{{\bf
r}\})^{\top}f+c_0f^{-1}{\bf q}{\mathcal M}_0D^{-1}{\bf r}^{\top}f=\\
=D-c_1
\tilde{M}_2\{{\tilde{\bf q}}\}\!{\mathcal M}_0D^{-1}{\tilde{\bf
r}}^{\top}D-c_1{\tilde{\bf q}}{\mathcal
M}_0D^{-1}(\hat{M}_2^{\tau}\{\tilde{\bf r}\})^{\top}D-c_0{\tilde{\bf
q}}{\mathcal M}_0D^{-1}{\tilde{\bf r}}^{\top}D.
\end{array}
\end{equation}

The equation $[\tilde{M}_2,\tilde{P}_{1,1}]=0$ is equivalent to the
following system:
\begin{equation}\label{PM2}
[{\tilde{P}}_{1,1},{\tilde{M}}_2]_{>0}=0,\,
c_1{\tilde{M}}_2^2\{{\tilde{{\bf
q}}}\}+c_0{\tilde{M}}_2\{{\tilde{\bf
q}}\}=0,\,\,c_1(\hat{M}_2^{\tau})^2\{{\tilde{{\bf
r}}}\}+c_0\hat{M}_2^{\tau}\{{\tilde{{\bf r}}}\}=0.
\end{equation}
or in the equivalent form (after notation ${\bf q}_0:={\tilde{\bf
q}}$, ${\bf r}_0:={\tilde{\bf r}}$ ):
\begin{equation}\label{PM23}
\begin{array}{l}
[{\tilde{P}}_{1,1},{\tilde{M}}_2]_{>0}=0,\, {\bf
q}_1={\tilde{M}}_2\{{{{\bf q}_0}}\},\,\, {\bf
r}_1=\hat{M}_2^{\tau}\{{{\bf r}}_0\},
\\
c_1{\tilde{M}}_2\{{{{\bf q}_1}}\}+c_0{\tilde{M}}_2\{{{\bf
q}}_0\}=0,\,\,c_1\hat{M}_2^{\tau}\{{{\bf
r}}_1\}+c_0\hat{M}_2^{\tau}\{{{{\bf r}_0}}\}=0.
\end{array}
\end{equation}

 System (\ref{PM23}) is the generalization
of the Chen-Lee-Liu system (case $c_1=0$, $c_0=1$). In case of
additional reduction $\alpha_2\in i{\mathbb{R}}$, $c_0=0$, $c_1$$\in{\mathbb{R}}$, ${\mathcal
M}^*_0=-{\mathcal M}_0$, ${\bf r}=\bar{\bf q}$ (\ref{PM23}) reads as
following:
\begin{equation}
\begin{array}{l}
\alpha_2{\bf q}_{0,t_2}-{\bf q}_{0,xx}+2c_1({\bf
q}_1{\mathcal{M}}_0{\bf q}^{*}_0+{\bf q}_0\!{\mathcal{M}}_0\!{\bf
q}^{*}_1){\bf q}_{0,x}-{\bf q}_1=0,\,\,\\ \alpha_2{\bf
q}_{1,t_2}-{\bf q}_{1,xx}+2c_1({\bf q}_1{\mathcal{M}}_0{\bf
q}^{*}_0+{\bf q}_0{\mathcal{M}}_0\!{\bf q}^{*}_1){\bf q}_{1,x}=0.
\end{array}
\end{equation}

We shall also point out that the extension of the k-cmKP hierarchy
(\ref{NH}) can also be generalized to the matrix case. It leads to
matrix generalizations of integrable systems that hierarchy
(\ref{NH}) contains (including Chen-Lee-Liu (\ref{CLLiu}) and
modified-type KdV equation (\ref{kkdv})). In particular, the matrix
generalization of the modified KdV-type equation (\ref{kkdv})
differs from the well-known matrix mKdV equation  that was
investigated by the inverse scattering method in \cite{Khr}.


\section{Acknowledgment}

The second-named author Yu.M. Sydorenko (J. Sidorenko till 1998 in
earlier transliteration) thanks the Ministry of Education, Science,
Youth and Sports of Ukraine for partial financial support (Research
Grant MA-107F).

\end{document}